\theoremstyle{definition}
\newtheorem{remark}{Remark}
\newtheorem{lemma}{Lemma}
\theoremstyle{plain}
\newtheorem{theorem}{Theorem}
\newtheorem{assum}{Assumption}
\newcommand{\cov}{\mathbf{Cov}}
\newcommand{\Var}{\mathbf{Var}}
\newcommand{\E}{\mathbf{E}}
\newcommand{\X}{\mathcal{X}}
\newcommand\smallO{
	\mathchoice
	{{\scriptstyle\mathcal{O}}}
	{{\scriptstyle\mathcal{O}}}
	{{\scriptscriptstyle\mathcal{O}}}
	{\scalebox{.5}{$\scriptscriptstyle\mathcal{O}$}}
}
\newcommand\numberthis{\addtocounter{equation}{1}\tag{\theequation}}
\title{Bibliography management: \texttt{biblatex} package}
\title{Multivariate strong invariance principles in \\ Markov chain Monte Carlo}
\newcommand{\footremember}[2]{%
	\footnote{#2}
	\newcounter{#1}
	\setcounter{#1}{\value{footnote}}%
}
\author{%
	Arka Banerjee\footremember{alley}{Department of Mathematics and Statistics, Indian Institute of Technology, Kanpur, India 208016. Email: \href{arkabee20@iitk.ac.in}{arkabee20@iitk.ac.in} .}%
	\and Dootika Vats\footremember{alle2}{Department of Mathematics and Statistics, Indian Institute of Technology, Kanpur, India 208016. Email: \href{dootika@iitk.ac.in}{dootika@iitk.ac.in} .}%
}
\begin{document}
	\onehalfspacing
	\maketitle
	
	\begin{abstract}
		Strong invariance principles in Markov chain Monte Carlo are crucial to theoretically grounded output analysis. Using the wide-sense regenerative nature of the process, we obtain explicit bounds on the  almost sure convergence rates for partial sums of multivariate ergodic Markov chains.  Further, we present results on the existence of strong invariance principles for both polynomially and geometrically ergodic Markov chains without requiring a $1$-step minorization condition. Our tight and explicit rates have a direct impact on output analysis, as it allows the verification of important conditions in the strong consistency of certain variance estimators.
	\end{abstract}

	\section{Introduction}
	\label{sec:intro}
	
	Markov chain Monte Carlo (MCMC) is the workhorse computational algorithm for Bayesian inference. In MCMC, given an intractable target distribution -- typically a multi-dimensional Bayesian posterior -- an ergodic Markov chain is constructed such that its stationary distribution is the desired target distribution. The increasing complexity of modern data and modeling strategies  warrants the need for ensuring quality assessment of MCMC output. Consistent estimation of the variance of MCMC estimators often requires the assumption of a strong invariance principle (SIP) on the underlying Markov chain \citep{chan2022mean,pengel2021strong,vats2018strong}. Utilizing wide-sense regenerative properties of ergodic Markov chains, we present conditions for when a multivariate strong invariance principle holds under polynomial and geometric ergodicity. Our rates are explicit and the tightest known in this framework. We further explain how the explicit rates have a direct consequence on practical variance estimation in MCMC.

	Let $\{X_{t}\}_{t\ge1}$ be a $d$-dimensional, $\pi$-stationary stochastic process on a measurable space $(\mathcal{X}, \mathcal{B(X)})$ and denote the sequence of partial sum with $S_{n} = \sum_{t=1}^{n} X_{t}$. Let $\Vert \cdot \Vert$ denote the Euclidean norm, $L$ be a $d \times d$ positive-definite matrix, and $\kappa:\mathbb{N} \rightarrow \mathbb{R^{+}}$ be an increasing function.  A multivariate SIP holds if on a suitably rich probability space one can construct $\{X_{t}\}_{t\ge1}$ along with a $d$-dimensional Wiener process $\{W(t): t \ge 0\}$ so that as $n \to \infty$,
	\begin{equation*}
		\left\Vert S_{n} - \E(S_{n}) - L W(n) \right\Vert = \mathcal{O}(\kappa(n)) \qquad \text{with probability 1}\,.
	\end{equation*}
	
	The rate $\kappa(n)$ often depends on the moments and the amount of correlation in the process. For independent and identically distributed (iid) univariate processes exhibiting a moment generating function, \cite{komlos1975approximation,komlos1976approximation} (KMT) obtained the rate $\kappa(n) = \log(n)$;  if $X_1$ has $r$ moments for $r > 3$, they obtain rate $\kappa(n) = n^{1/r}$. These are the tightest rates possible. 	For correlated sequences, \cite{stout1975almost} collate the various known SIP rates  for $\phi$-mixing, $\alpha$-mixing, non-stationary, and regenerative processes. The rates obtained are of the form $\kappa(n) = n^{1/2 - \lambda}$ where $\lambda$ is known.
	
	The situation is quite different in the multivariate case. Although, for iid vectors \cite{einmahl1989extensions} extend the KMT result, the proof techniques used in the univariate case do not, in general, yield explicit rates in the multidimensional one \citep{monrad1991problem}. For general $\phi$-mixing processes,
	\cite{berkes1979approximation,eberlein1986strong,kuelbs1980almost} obtain a rate of $n^{1/2 - \lambda}$ for some unknown $0 < \lambda < 1/2$. Making certain parametric assumptions on the process and building on the work of \cite{wu2007strong}, \cite{liu2009strong,berkes2014komlos} obtain explicit near-optimal and optimal rates; these assumptions are appropriate in applications such as time-series, but cannot be made for the Markov chains employed in MCMC. Markov chains have thus been given special focus.
	
	For the univariate case, \cite{jones2006fixed} obtain explicit rates for uniformly ergodic Markov chains. For polynomially ergodic univariate  continuous-time Markov processes with $r$ moments, \cite{pengel2021strong} obtain rate $\max\{n^{1/4} \log(n), n^{1/r}\log^2(n)\}$. If the Markov chain exhibits a $1$-step minorization, it is classically regenerative. Let $p$ denote the moments of the regeneration time and let the partial sum of a regenerative tour exhibit $2 + \delta$ finite moments; then \cite{csaki1995additive} obtain  rate $n^{ \max\{1/(2+ \delta), 1/2p, 1/4\}}\log(n)$. This work forms the basis of obtaining explicit rates for Markov chains with \cite{jones2006fixed} obtaining an expression for geometrically ergodic Markov chains and \cite{merlevede2015strong} obtaining the KMT result for geometrically ergodic bounded Markov chains. 
	
	For univariate Markov chains exhibiting an $l$-step minorization, \cite{samur2004regularity} obtain a rate of $n^{1/2 - \lambda}$ for some unknown $0< \lambda < 1/2$.  The best known result under this setting is that of \cite{dong2019new} and \cite{zhu2020asymptotic}, who obtain rate $n^{\max\{1/(2+\delta), 1/p, 1/4\}} \log(n)$ for univariate and multivariate processes, respectively, under moment assumptions on the regeneration time. 
 
    The rates obtained by \cite{dong2019new} and \cite{zhu2020asymptotic} fall short of that of \cite{csaki1995additive}. In Section~\ref{sec:main}, we obtain the \cite{csaki1995additive} rate under a general $l$-step minorization of a multidimensional Markov chain. Further, we show that the moment assumptions required for this result hold for geometrically and polynomially ergodic Markov chains. The tightness and tractability of the bound has a direct impact on consistent estimation of MCMC variances, as we highlight in Section~\ref{sec:practical}. 

    In addition to the almost sure convergence results, $l$-step minorization of a Harris ergodic Markov chain produces a regenerative structure of the underlying Markov chain. This wide-sense regeneration structure yields a regenerative estimator with desirable asymptotic properties. A brief discussion on the regenerative estimators is provided in Section~\ref{sec:reg_est}.
	
	\section{Definitions and main result}
	\label{sec:theory}
	
	\subsection{Markov chains and wide-sense regeneration}
	\label{sec:regen}

	Consider a $\pi$-Harris ergodic Markov chain with a one-step Markov transition kernel
	\[
	P(x, A) := \Pr(X_{k+1} \in A \mid X_{k} = x) \ \ \text{ for } k \ge 1,  x \in \mathcal{X}, \text{   and   } A \in \mathcal{B}(\mathcal{X})\,.
	\] 
	Similarly the $n$-step  Markov transitional kernel is  
	\[
	P^{n}(x, A) := \Pr(X_{k+n} \in A \mid X_{k} = x) \ \ \text{ for } k \ge 1, x \in \mathcal{X}, \text{   and   } A \in \mathcal{B}(\mathcal{X})\,.
	\]
	
	Our SIP results will apply to Markov chains exhibiting certain rates of convergence measured via the total variation distance: 
	\begin{equation}
		\label{eq:tv_dist}
		\Vert P^{n}(x, \cdot) - \pi(\cdot) \Vert_{TV} := \sup_{A \in \mathcal{B}(\mathcal{X})} \left\vert P^{n}(x, A) - \pi(A) \right\vert \leq M(x)G(n)\,,	
	\end{equation}
	where $G(n) \rightarrow 0$ as $n \rightarrow \infty$ and $0< \E_{\pi}[M(X)] < \infty$; here we use notation $\E_F$ to denote expectations when $X_1 \sim F$. If $G(n) = n^{-k}$ for some $k \ge 1$, the Markov chain is polynomially ergodic of order $k$. If $G(n) = t^{n}$ for some $0 \leq t < 1$, the Markov chain is geometrically ergodic. 
	
	Markov chains employed in MCMC are typically $\pi$-Harris ergodic and are thus wide-sense regenerative \citep{glynn2011wide} since they exhibit an $l$-step minorization \citep{athreya1978new}. An $l$-step minorization for $l \ge 1$ holds if there exists  $h: \mathcal{X} \rightarrow [0, 1]$ with $0 < \E_{\pi}[h(X)] < \infty$ and a probability measure $Q(\cdot)$ such that for all $x \in \X, A \in \mathcal{B}(\mathcal{X})$
	\begin{equation}
		\label{eq:mino}
		P^{l}(x, A) \ge h(x) Q(A)\,.
	\end{equation}
	Equation~\eqref{eq:mino} allows the following representation of the $l$-step transition kernel:
	\begin{equation}
		\label{eq:split}
		P^{l}(x, A) = h(x) Q(A) + (1 - h(x))R(x, A), 
	\end{equation}
	where $R(x, \cdot)$ is the residual distribution. By virtue of \eqref{eq:split}, an alternative sampling strategy is possible using an augmented Markov chain, $\{(X^{*}_{t}, \delta_{t})\}_{t \geq 1}$ where $\delta$'s are binary variables. Consider $X^{*}_{1} \sim Q$ and $\delta_{i} \sim  \text{Bernoulli}(h(X^{*}_{i}))$ for $\ i \ge 1$. For $i \ge 2$, if $\delta_{i} = 1$, $X^{*}_{i+l} \sim Q$, else $X^{*}_{i+l} \sim R(X^{*}_{i}, \cdot)$. This alternative sampling strategy is a way of generating a probabilistically similar random process to the original Markov chain with initial kernel $Q(\cdot)$. 
	Every such time-point $i$, so that $\delta_{i} = 1$ is known as a regeneration time. Denote the $k^{\text{th}}$ regeneration time as $T_{k}$, with $T_{0} = 0$. Let $\tau_{k} := T_{k} - T_{k-1}$ be the time to the $k^{\text{th}}$ regeneration from the $(k-1)^{\text{th}}$ regeneration. Denote $\mu := \E_{Q}(\tau_{1})$.
	
	Let $f:\mathcal{X}\rightarrow \mathbb{R}^{d}$ for $d \geq 1$ be a function whose expectation under $\pi$ is of interest. For $k \ge 1$, define the sum of a tour as $Z_{k} := \sum_{t=T_{k-1}+1}^{T_{k}} f(X_{t})$, and denote $\eta := \E_{Q}(Z_{1})$. When $l = 1$, the Markov chain is classically regenerative and $(Z_{k}, \tau_{k})_{k \geq 1}$ are iid
 . For the one-dimensional case, this feature is exploited by \cite{csaki1995additive} to arrive at an SIP using classical KMT results. These results have been adapted to MCMC by \cite{jones2006fixed}. However, for many MCMC samplers, $l = 1$ is a limiting assumption. Since all Harris ergodic Markov chains satisfy an $l$-step minorization for some $l$ \citep[see][for e.g.]{athreya1978new}, the assumption of an $l$-step minorization is no longer limiting. However, for general $l$, \cite{glynn2011wide} discuss that $(Z_{k}, \tau_{k})_{k \geq 1}$ is a one-dependent stationary process, and thus the classical KMT results can no longer be utilized to establish an SIP.

	\subsection{Main result}
	\label{sec:main}
	
	We now present our main results establishing an SIP. Define
	\begin{align*}
		\Sigma_{Z} := &
		\Var_{Q}\left(Z_{1} - \frac{\tau_{1}}{\mu} \eta\right) +  \cov_{Q}\left(Z_{1} - \frac{\tau_{1}}{\mu} \eta, Z_{2} - \frac{\tau_{2}}{\mu} \eta\right)\\
        & + \cov_{Q}\left(Z_{2} - \frac{\tau_{2}}{\mu} \eta, Z_{1} - \frac{\tau_{1}}{\mu} \eta\right).\numberthis \label{eq:sigma_z}
	\end{align*}
	\begin{theorem}
		\label{Ths} Let $\{X_{t}\}_{t\geq 1}$ be a $\pi$-Harris ergodic Markov chain and thus \eqref{eq:mino} holds.  Suppose
		\begin{enumerate}[(a)]
			\item $\E_{Q}(\tau_{1}^{p}) < \infty $ for some $p > 1$  and, 
			\item for some \( \delta > 0 \)   
			\begin{equation}
				\E_{\pi}\left[ \left( \sum_{t=1}^{\tau_{1}} \left\Vert f(X_{t}) - \frac{\eta}{\mu} \right\Vert \right)^{2+\delta}\right] < \infty \,. \numberthis \label{eq:assm-2}
			\end{equation}	
		\end{enumerate}
		Then, on a suitably rich probability space, one can construct $\{X_t\}_{t \geq 1}$ together with a $d$-dimensional standard Wiener process $\{ W(t): t \ge 0 \}$ such that for $\beta = \max\{ 1/{(2+\delta)}, 1/2p, 1/4 \}$, as $n \to \infty$
		\begin{equation}
			\label{eq:main_sip}
			\left\Vert \sum_{t=1}^{n} f(X_{t}) - n \E_{\pi}[f(X)] - \frac{\Sigma_{Z}^{1/2}}{\sqrt{\mu}} W(n) \right\Vert = \mathcal{O}(n^{\beta}\log(n)) \quad \text{with probability 1}\,. 
		\end{equation}
	\end{theorem}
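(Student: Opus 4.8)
The plan is to exploit the wide-sense regenerative decomposition and reduce the SIP to a strong approximation of the \emph{one-dependent} sequence of centred tour sums, composed with a renewal time-change. Throughout I would set $U_k := Z_k - (\tau_k/\mu)\eta$, the centred tour fluctuation whose long-run covariance is, by direct computation of the lag $0,\pm1$ autocovariances of a one-dependent stationary sequence, exactly the matrix $\Sigma_Z$ of \eqref{eq:sigma_z}. I would first record the standard regeneration (ratio) identity $\E_\pi[f(X)] = \eta/\mu$, so that centring the partial sum by $n\,\E_\pi[f(X)]$ coincides with centring each tour sum by $(\tau_k/\mu)\eta$. Letting $N(n)$ be the number of regenerations up to time $n$, the first step is the decomposition (using $T_0=0$, so there is no initial incomplete tour)
\[
\sum_{t=1}^{n} f(X_t) - n\frac{\eta}{\mu} = \sum_{k=1}^{N(n)} U_k + \sum_{t=T_{N(n)}+1}^{n}\Bigl(f(X_t)-\frac{\eta}{\mu}\Bigr),
\]
which isolates a main term $\sum_{k\le N(n)} U_k$ and a single boundary term governed by the current incomplete tour.

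Second, I would dispose of the boundary term and quantify the renewal count. Assumption (b) bounds the $(2+\delta)$-th moment of a single tour's fluctuation $\sum_{t=1}^{\tau_1}\Vert f(X_t)-\eta/\mu\Vert$, so a Borel--Cantelli argument over tours gives that the incomplete-tour contribution is $\mathcal{O}(n^{1/(2+\delta)})$ almost surely, hence dominated by $n^\beta$. Assumption (a), $\E_Q(\tau_1^{p})<\infty$, yields the renewal estimate $N(n) = n/\mu + \mathcal{O}(\rho(n))$ almost surely, where $\rho(n)=\sqrt{n\log\log n}$ when $p\ge 2$ (finite variance, via the law of the iterated logarithm for the partial sums of $\tau_k$) and $\rho(n)=o(n^{1/p})$ when $1<p<2$ (via Marcinkiewicz--Zygmund).

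Third --- the technical heart --- I would establish a multivariate strong approximation for the main term: construct a $d$-dimensional standard Wiener process $B$ with
\[
\Bigl\Vert \sum_{k=1}^{m} U_k - \Sigma_Z^{1/2} B(m) \Bigr\Vert = \mathcal{O}\bigl(m^{1/(2+\delta)}\log m\bigr)\qquad\text{a.s.}
\]
\textbf{The main obstacle is precisely here:} because of the $l$-step (rather than $1$-step) minorization, $\{U_k\}$ is one-dependent rather than iid, so the classical KMT result cannot be applied directly. I would decouple by splitting into the odd- and even-indexed subsequences $\{U_{2j-1}\}$ and $\{U_{2j}\}$, each of which is iid (one-dependence makes entries two apart independent), apply a multivariate iid strong approximation (of Einmahl/Zaitsev type) under the $(2+\delta)$ moment furnished by assumption (b) to each subsequence, and recombine, verifying that the surviving lag-one cross-covariance reassembles into $\Sigma_Z$. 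Controlling the multivariate rate and the reassembly of the covariance is the delicate part.

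Finally, I would patch the random index and the change of time. Replacing $B$ evaluated at the random count $m=N(n)$ by $B(n/\mu)$ costs, via the Lévy modulus of continuity of Brownian motion over an interval of length $|N(n)-n/\mu|=\mathcal{O}(\rho(n))$, an error of order $\sqrt{\rho(n)\log n}$, namely $\mathcal{O}(n^{1/4}\sqrt{\log n})$ when $p\ge 2$ and $\mathcal{O}(n^{1/(2p)}\sqrt{\log n})$ when $1<p<2$; this is exactly where the $1/4$ and $1/2p$ terms of $\beta$ enter, while the $1/(2+\delta)$ term comes from the tour-sum approximation above. Rescaling time by $\mu$ gives $\Sigma_Z^{1/2}B(n/\mu) = (\Sigma_Z^{1/2}/\sqrt{\mu})\,W(n)$ for a standard Wiener process $W$, and collecting all error terms yields
\[
\Bigl\Vert \sum_{t=1}^{n} f(X_t) - n\,\E_\pi[f(X)] - \frac{\Sigma_Z^{1/2}}{\sqrt{\mu}} W(n) \Bigr\Vert = \mathcal{O}\bigl(n^{\beta}\log n\bigr)\qquad\text{with probability } 1,
\]
with $\beta = \max\{1/(2+\delta),\,1/2p,\,1/4\}$, as claimed in \eqref{eq:main_sip}.
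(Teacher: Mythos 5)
Your overall architecture --- decomposing the partial sum at regeneration times, killing the incomplete-tour boundary term by a Borel--Cantelli argument, estimating the renewal count, patching the random index via the modulus of continuity of the Wiener process, and rescaling time by $\mu$ --- coincides with the paper's proof, and those steps are sound (modulo the small point that assumption (b) is stated under $\E_{\pi}$ while the tours are stationary under $Q$, so you need the $l$-step comparison inequality, Lemma~\ref{lm:hobertext}, to transfer the moment bound to $\E_{Q}$).

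However, your ``technical heart'' --- the strong approximation for the one-dependent sequence $U_k = Z_k - (\tau_k/\mu)\eta$ --- has a genuine gap, and it is precisely the step where the difficulty of the theorem lives. Splitting into odd- and even-indexed subsequences produces two iid sequences that are \emph{mutually dependent}, and a strong approximation theorem applied to each produces two Wiener processes whose joint law you do not control: each embedding is a construction on its own enlarged probability space, so nothing makes the pair jointly Gaussian, and the sum of the two approximating processes need not be a Wiener process at all. Worse, even under the most favourable joint construction the covariance cannot come out right: each subsequence is iid with per-term covariance $\Var_{Q}(U_1)$, so the two marginal approximations between them carry only $\Var_{Q}(U_1)$ per step, while the lag-one cross terms $\cov_{Q}(U_1,U_2)+\cov_{Q}(U_2,U_1)$ appearing in $\Sigma_Z$ of \eqref{eq:sigma_z} are present in \emph{neither} marginal approximation; they are lost by the splitting and cannot be ``reassembled'' afterwards. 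This loss of the cross-covariance is exactly why one-dependence defeats classical KMT-style arguments. The paper circumvents it by invoking a strong approximation theorem designed for dependent sequences: by \cite{glynn2011wide}, wide-sense regeneration yields iid blocks $S_k$ with $(Z_k,\tau_k) = g(S_{k-1},S_k)$, so $\{U_k\}$ is a functional of iid innovations; this verifies Condition A of \cite{liu2009strong}, and their Theorem 2.1 then gives, for the one-dependent sequence itself, a Wiener approximation with the correct long-run covariance $\Sigma_Z$ at rate $\mathcal{O}\bigl(\xi(n)^{1/(2+\delta)}\bigr)$. Without such a result (or a genuinely joint construction, e.g.\ blocks of growing length separated by gaps, with all the attendant bookkeeping), your third step does not go through, and the remaining steps have nothing to patch together.
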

	
	\begin{proof}
		See Appendix~\ref{Thp}.
	\end{proof}
	
	\begin{remark}
		As far as we are aware Theorem~\ref{Ths} is the first result we know that matches the \cite{csaki1995additive} SIP rate for general multivariate functionals. Additionally, similar to \cite{dong2019new}, we remove the assumption of a 1-step minorization. Under the same assumptions, \cite{dong2019new,zhu2020asymptotic} obtain the rate with $\beta = \max\{ 1/{(2+\delta)}, 1/p, 1/4 \}$, hence our rates are tighter.
	\end{remark}

	For practical application to MCMC, it is important to assess when and for what values of $p$ is $\E_{Q}(\tau_{1}^{p}) < \infty $. For a 1-step minorization, \cite{hobert2002applicability} show that $\tau_1$ has a moment-generating function when $\{X_t\}_{t\ge 1}$ is geometrically ergodic. The next two lemmas are critical to obtaining moment conditions over $\tau$ for polynomially and geometrically ergodic Markov chains. Additionally, Lemma~\ref{lm:sum_exp} and Lemma~\ref{lm:geom_f} aid in proving the moment existence of regenerative sums. Proofs of the following lemmas are in Appendix~\ref{sec:theorem2}.
 
    \begin{lemma}\label{lm:poly}
		Let $\{X_t\}_{t\geq 1}$ be a $\pi$-stationary polynomially ergodic Markov chain of order $\xi > (2+\delta)(1 + (2 + \delta)/\delta^{*})$ for some $\delta > 0$ and $\delta^{*} > 0$. Then for all $p \in (0, \xi)$, $\E_{Q} [\tau_{1}^{p}] < \infty$.
	\end{lemma}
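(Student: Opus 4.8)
The plan is to convert the moment bound into a tail estimate for the regeneration time and then transfer the polynomial total-variation rate of \eqref{eq:tv_dist} to that tail. Writing the moment as a sum over tails, $\E_Q[\tau_1^p] \asymp \sum_{k \ge 1} k^{p-1}\,\Pr_Q(\tau_1 > k)$, it suffices to show that $\Pr_Q(\tau_1 > k)$ decays essentially like $k^{-\xi}$, since then the comparison $\sum_k k^{p-1-\xi} < \infty$ for every $p < \xi$ yields the claim. This reduces the whole lemma to a single tail estimate, with all the substantive work concentrated there.

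For the tail estimate I would work with the split construction underlying \eqref{eq:split}. The event $\{\tau_1 > k\}$ is the event that no regeneration occurs in $k$ steps, which under the augmented chain equals $\E_Q\bigl[\prod_t (1 - h(X_t))\bigr]$ evaluated along the sub-stochastic taboo dynamics that avoid the atom. The point is that a long no-regeneration stretch forces the conditioned chain to concentrate on regions where $h$ is small, and the minorization \eqref{eq:mino} together with the convergence bound \eqref{eq:tv_dist} is precisely what prevents this from persisting. I would make this rigorous through a renewal / last-exit decomposition (in the spirit of Nummelin--Tuominen and Tuominen--Tweedie), carried out over blocks of length $l$, since for general $l$ the pairs $(Z_k,\tau_k)$ are only one-dependent, as recorded after \eqref{eq:split} and in \cite{glynn2011wide}. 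The aim is an inequality bounding $\Pr_Q(\tau_1 > k)$ by $G$ evaluated at a fixed fraction of $k$, with the envelope entering only through the finite average $\E_\pi[M(X)]$, so that the polynomial order $\xi$ passes to the tail of $\tau_1$.

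The main obstacle, and the source of the stated threshold, is that this transfer is not lossless: the minorization function $h$ can be arbitrarily small and $M$ is only $\pi$-integrable, so turning the $\pi$-averaged rate \eqref{eq:tv_dist} into a genuine tail bound on $\tau_1$ requires separating out the contribution of the low-$h$ part of the space at a level governed by a free exponent, which I would call $\delta^*$. Splitting the relevant expectation by H\"older's inequality with the conjugate pair $\tfrac{2+\delta+\delta^*}{2+\delta}$ and $1 + \tfrac{2+\delta}{\delta^*}$, and then demanding that the rate factor, raised to the corresponding power, still decay fast enough that the resulting tail is summable against $k^{p-1}$ for the moment orders needed downstream (in particular $p=2+\delta$), forces exactly $\xi > (2+\delta)\bigl(1 + (2+\delta)/\delta^*\bigr)$. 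Choosing this truncation so that the polynomial exponent survives the H\"older loss while using only $0 < \E_\pi[M(X)] < \infty$ and $0 < \E_\pi[h(X)] < \infty$ is the delicate step; once the tail bound of order $\xi$ is in hand, the tails-to-moments summation of the first paragraph and the $l$-block bookkeeping are routine.
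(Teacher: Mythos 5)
Your overall architecture (a tail bound for $\tau_1$ under $Q$, followed by the tails-to-moments summation) is fine in outline, and the two outer steps are indeed routine; the transfer from $\pi$ to $Q$ can even be postponed to the very end via Lemma~\ref{lm:hobertext}, since \eqref{eq:mino} gives $\pi(\cdot) \ge \E_{\pi}[h(X)]\,Q(\cdot)$. But the entire content of the lemma sits in the one step you never carry out: the claim that \eqref{eq:tv_dist} with $G(n)=n^{-\xi}$ forces $\Pr_Q(\tau_1 > k)$ to decay like $G$ evaluated at a fixed fraction of $k$. Announcing a ``renewal / last-exit decomposition in the spirit of Nummelin--Tuominen and Tuominen--Tweedie'' is a program, not a proof. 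The direction you need (convergence rate $\Rightarrow$ tail/moment of the regeneration time) is the delicate direction of that duality --- the easy direction is the converse --- and it must cope with the envelope $M$ being only $\pi$-integrable, with $h$ being possibly arbitrarily small, and with the one-dependence coming from $l$-step blocks. The rigorous form of precisely this implication is what the paper invokes instead: by \citet{jones2004markov}, polynomial ergodicity of order $\xi$ yields $\alpha$-mixing coefficients with $\alpha(n) \le n^{-\xi}$, hence $\sum_n n^{p-1}\alpha(n)<\infty$ for every $p<\xi$; \citet[Proposition~3.1]{samur2004regularity} converts exactly this summability into $\E_{\pi}[\tau_1^{p}]<\infty$; and Lemma~\ref{lm:hobertext} then gives $\E_{Q}[\tau_1^{p}]<\infty$. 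For your sketch to become a proof you would essentially have to reprove Samur's proposition, and you give no argument for the key inequality $\Pr_Q(\tau_1>k)\lesssim G(ck)$.

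You also misplace the role of the hypothesis $\xi > (2+\delta)\bigl(1+(2+\delta)/\delta^{*}\bigr)$: no such threshold is, or should be, generated inside this lemma. The conclusion $\E_{Q}[\tau_1^{p}]<\infty$ for all $p\in(0,\xi)$ holds for any polynomial order $\xi$, with no lower bound needed; the threshold appears in the statement only because of how the lemma is consumed downstream. Indeed $(2+\delta)\bigl(1+(2+\delta)/\delta^{*}\bigr)=(2+\delta)(2+\delta+\delta^{*})/\delta^{*}$, which is exactly the condition $\phi > p(p+\delta^{*})/\delta^{*}$ of Lemma~\ref{lm:sum_exp} with $p=2+\delta$; that is where the H\"older split with conjugate exponents $(2+\delta+\delta^{*})/(2+\delta)$ and $1+(2+\delta)/\delta^{*}$ actually lives, and what it buys is the $(2+\delta)$-moment \eqref{eq:assm-2} of the regenerative \emph{sums}, not any moment of $\tau_1$. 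Forcing that split into the tail estimate for $\tau_1$, and tying the estimate to the single value $p=2+\delta$, is inconsistent with what this lemma asserts (finiteness for \emph{all} $p<\xi$) and is a sign that the proposed derivation would not close as stated.
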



    \begin{lemma}\label{lm:geom}
		Let $\{X_t\}_{t\geq 1}$ be a $\pi$-stationary geometrically ergodic Markov chain. Then $\E_{Q} [\tau_{1}^{p}] < \infty$ for any $p > 1$. 
	\end{lemma}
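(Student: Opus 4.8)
The plan is to deduce Lemma~\ref{lm:geom} directly from Lemma~\ref{lm:poly} by observing that geometric ergodicity is stronger than polynomial ergodicity of every finite order. First I would fix an arbitrary $p > 1$ and show that a geometrically ergodic chain satisfies the bound \eqref{eq:tv_dist} with $G(n) = n^{-\xi}$ for any prescribed $\xi > 0$. Starting from the geometric bound $\Vert P^{n}(x,\cdot) - \pi \Vert_{TV} \le M(x) t^{n}$ with $0 < t < 1$, the key elementary fact is that
\[
C_{\xi} := \sup_{n \ge 1} n^{\xi} t^{n} < \infty,
\]
since $n^{\xi} t^{n} \to 0$ as $n \to \infty$ and the sequence is therefore bounded. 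Hence $t^{n} \le C_{\xi}\, n^{-\xi}$ for all $n \ge 1$, which gives
\[
\Vert P^{n}(x,\cdot) - \pi \Vert_{TV} \le M(x) t^{n} \le \bigl(C_{\xi} M(x)\bigr)\, n^{-\xi}.
\]
Writing $\tilde{M}(x) := C_{\xi} M(x)$, we have $0 < \E_{\pi}[\tilde{M}(X)] = C_{\xi}\,\E_{\pi}[M(X)] < \infty$, so the chain is polynomially ergodic of order $\xi$ in the sense of \eqref{eq:tv_dist}.

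Next I would choose the order $\xi$ large enough to invoke Lemma~\ref{lm:poly}. That lemma requires the existence of $\delta > 0$ and $\delta^{*} > 0$ with $\xi > (2+\delta)\bigl(1 + (2+\delta)/\delta^{*}\bigr)$; letting $\delta \downarrow 0$ and $\delta^{*} \uparrow \infty$ drives the threshold arbitrarily close to $2$, so every $\xi > 2$ admits an admissible pair $(\delta, \delta^{*})$. I would therefore fix any $\xi > \max\{p, 2\}$ together with a corresponding admissible $(\delta, \delta^{*})$. By the previous paragraph the chain is polynomially ergodic of this order $\xi$, so Lemma~\ref{lm:poly} yields $\E_{Q}[\tau_{1}^{q}] < \infty$ for every $q \in (0, \xi)$; in particular, since $p < \xi$, we obtain $\E_{Q}[\tau_{1}^{p}] < \infty$. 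As $p > 1$ was arbitrary, the conclusion follows.

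There is no substantial obstacle here: the argument is essentially a reduction, and the only points requiring care are bookkeeping. Specifically, one must confirm that the envelope $M(x)$ in the geometric bound can be reused to produce a valid polynomial envelope with finite and positive $\pi$-expectation (which holds because $\tilde{M}$ differs from $M$ only by the positive constant $C_{\xi}$), and that the parameter constraints of Lemma~\ref{lm:poly} are compatible with the requirement $\xi > p$ (which holds since $\xi$ may be taken as large as desired in the geometric case). Both are immediate once the inequality $t^{n} \le C_{\xi}\, n^{-\xi}$ is in hand, and no delicate estimation is needed.
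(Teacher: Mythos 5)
Your proposal is correct, but it takes a genuinely different route from the paper. You reduce the geometric case to Lemma~\ref{lm:poly}: since $n^{\xi}t^{n}\to 0$, geometric decay is dominated by $C_{\xi}\,n^{-\xi}$ for every $\xi$, so the chain is polynomially ergodic of arbitrary order with envelope $C_{\xi}M(x)$, and you correctly observe that the hypothesis of Lemma~\ref{lm:poly} is satisfiable for any $\xi>2$ (send $\delta\downarrow 0$, $\delta^{*}\uparrow\infty$), so taking $\xi>\max\{p,2\}$ finishes the argument; there is no circularity, as Lemma~\ref{lm:poly} is proved independently. The paper instead argues directly: geometric ergodicity gives the mixing bound $\alpha(n)\le t^{n}$ via \cite{jones2004markov}, the series $\sum_{n} n^{p-1}t^{n}$ converges by the ratio test, \citet[Proposition 3.1]{samur2004regularity} then yields $\E_{\pi}[\tau_{1}^{p}]<\infty$, and Lemma~\ref{lm:hobertext} transfers this to $\E_{Q}[\tau_{1}^{p}]<\infty$. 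Your reduction is shorter and avoids duplicating the mixing-coefficient machinery, at the cost of leaning on the somewhat incidental parameter bookkeeping in the hypothesis of Lemma~\ref{lm:poly} (the $(\delta,\delta^{*})$ threshold exists there for downstream use in Theorem~\ref{thm:mcmc_sip}, not because it is intrinsic to the moment claim). The paper's direct proof has one practical advantage: it establishes the intermediate statement $\E_{\pi}[\tau_{1}^{p}]<\infty$ as a labeled equation, which is cited again later (e.g., in the proof of Lemma~\ref{lm:geom_f}); your version delivers only the $\E_{Q}$ conclusion explicitly, so if it were adopted the later references would need the $\E_{\pi}$ statement extracted separately.
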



    \begin{lemma}
		\label{lm:sum_exp}
		Let $\{X_t\}_{t\geq 1}$ be a $\pi$-Harris ergodic Markov chain so that \eqref{eq:mino} holds. Let  $\E_{\pi}\left(\Vert f(X) \Vert^{p+\delta^{*}}\right) < \infty$ for some $p > 1$ and $\delta^{*}> 0$ and $\E_{\pi}(\tau_{1}^{\phi}) < \infty$ for $\phi > p(p+\delta^{*})/\delta^{*}$. Then, 
		\[
		\E_{\pi}\left[\left(\sum_{i=1}^{\tau_{1}} \Vert f(X_{i}) \Vert\right)^{p}\right] < \infty \,.
		\]
	\end{lemma}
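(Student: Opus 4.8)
The plan is to bound the $p$-th moment of the random sum $\sum_{i=1}^{\tau_1}\|f(X_i)\|$ by decoupling the summands from the random number of terms through two successive applications of H\"older's inequality, thereby reducing the problem to (i) the \emph{first} moment of the cycle sum of $\|f\|^{p+\delta^*}$ and (ii) a sufficiently high moment of the regeneration time $\tau_1$. Both of these are controlled directly by the hypotheses.

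First I would set $s=p+\delta^*$ and apply H\"older's inequality to the sum of $\tau_1$ terms, viewing it as $\sum_{i=1}^{\tau_1}\|f(X_i)\|\cdot 1$ with conjugate exponents $s$ and $s/(s-1)$; raising the result to the power $p$ gives
\[
\left(\sum_{i=1}^{\tau_1}\|f(X_i)\|\right)^{p} \le \tau_1^{\,p(s-1)/s}\left(\sum_{i=1}^{\tau_1}\|f(X_i)\|^{s}\right)^{p/s}.
\]
Next I would take expectations and apply H\"older a second time, with the exponent $s/p>1$ on the second factor (so that the cycle sum of $\|f\|^{s}$ enters to the first power) and the conjugate exponent $(p+\delta^*)/\delta^*$ on the $\tau_1$ factor. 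A short computation shows the resulting exponent on $\tau_1$ equals $p(p+\delta^*-1)/\delta^*$, yielding
\[
\E_{\pi}\!\left[\left(\sum_{i=1}^{\tau_1}\|f(X_i)\|\right)^{p}\right] \le \left(\E_{\pi}\!\left[\sum_{i=1}^{\tau_1}\|f(X_i)\|^{p+\delta^*}\right]\right)^{p/(p+\delta^*)}\!\left(\E_{\pi}\big[\tau_1^{\,p(p+\delta^*-1)/\delta^*}\big]\right)^{\delta^*/(p+\delta^*)}.
\]
The second factor is finite: since $p(p+\delta^*-1)/\delta^* < p(p+\delta^*)/\delta^* < \phi$ and $\tau_1\ge 1$, the assumption $\E_\pi[\tau_1^{\phi}]<\infty$ gives $\E_\pi[\tau_1^{\,p(p+\delta^*-1)/\delta^*}]<\infty$ by monotonicity of moments. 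Note that the stated threshold $\phi>p(p+\delta^*)/\delta^*$ is in fact slightly stronger than the $p(p+\delta^*-1)/\delta^*$ the argument actually requires, so there is room to spare.

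It then remains to show the first factor is finite, i.e.\ $\E_{\pi}[\sum_{i=1}^{\tau_1}\|f(X_i)\|^{p+\delta^*}]<\infty$; this is the main obstacle. Here I would invoke the wide-sense regenerative structure guaranteed by the $l$-step minorization \eqref{eq:mino}: the expected reward accumulated over a single regeneration cycle of a nonnegative functional $g$ equals $\mu$ times its stationary mean, so with $g=\|f\|^{p+\delta^*}$ the cycle-sum expectation is $\mu\,\E_\pi[\|f(X)\|^{p+\delta^*}]$. This is finite because $\E_\pi[\|f(X)\|^{p+\delta^*}]<\infty$ by hypothesis and $\mu=\E_Q[\tau_1]<\infty$, the latter following from $\E_\pi[\tau_1^{\phi}]<\infty$ with $\phi>1$. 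The delicate point is establishing this cycle-reward identity in the present setting, where the cycles $(Z_k,\tau_k)$ form only a one-dependent stationary sequence rather than an iid one, and keeping careful track of the initial cycle, which is governed by the stationary law $\pi$ rather than by the regeneration measure $Q$. Once this identity is in place, the two H\"older bounds combine to give the claim.
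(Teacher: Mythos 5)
Your two H\"older applications are correct, and your exponent bookkeeping checks out: the power $p(p+\delta^{*}-1)/\delta^{*}$ on $\tau_{1}$ is indeed covered by the hypothesis $\phi > p(p+\delta^{*})/\delta^{*}$. The genuine gap is in the final step. The cycle-reward identity you invoke, $\E\left[\sum_{i=1}^{\tau_{1}} g(X_{i})\right] = \mu\,\E_{\pi}[g(X)]$ with $g = \Vert f \Vert^{p+\delta^{*}}$, is a statement about cycles \emph{begun at a regeneration}, i.e.\ about $\E_{Q}$ (this is exactly what Lemma~\ref{lm:Ef_eta-mu} provides). What your bound requires is $\E_{\pi}\left[\sum_{i=1}^{\tau_{1}} g(X_{i})\right]$: the chain starts from stationarity and $\tau_{1}$ is the time to the \emph{first} regeneration, so this initial segment is a length-biased (residual) cycle, not a typical one. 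Already in the classical case ($l=1$, iid cycles) the identity fails: Palm inversion gives
\begin{equation*}
\E_{\pi}\left[\sum_{i=1}^{\tau_{1}} g(X_{i})\right] = \frac{1}{\mu}\,\E_{Q}\left[\sum_{i=1}^{\tau_{1}} i\, g(X_{i})\right], \qquad \text{so in particular} \qquad \E_{\pi}[\tau_{1}] = \frac{\E_{Q}[\tau_{1}(\tau_{1}+1)]}{2\mu},
\end{equation*}
which is the inspection paradox. Controlling the correct right-hand side requires joint moments of $\tau_{1}$ and cycle sums under $Q$ that the hypotheses do not supply. Note also that Lemma~\ref{lm:hobertext} transfers moments in the wrong direction for your purpose: it bounds $\E_{Q}$ by a multiple of $\E_{\pi}$, not conversely, so you cannot convert your $\pi$-expectation into a $Q$-started cycle expectation.

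There is also a structural reason the reduction cannot be repaired cheaply: by loading the full exponent $p+\delta^{*}$ onto $f$ inside the cycle sum, you have spent all the moment slack in $\E_{\pi}\left(\Vert f(X)\Vert^{p+\delta^{*}}\right) < \infty$, leaving nothing with which to decouple $\Vert f(X_{i})\Vert$ from the random event $\{i \le \tau_{1}\}$; any H\"older split of $\E_{\pi}\left[\mathbb{I}(i \le \tau_{1})\Vert f(X_{i})\Vert^{p+\delta^{*}}\right]$ would demand strictly more than $p+\delta^{*}$ moments of $f$, and without a split the terms are not summable in $i$. The paper's proof is organized precisely to avoid this: it writes the sum as $\sum_{i \ge 1} \mathbb{I}(i \le \tau_{1}) \Vert f(X_{i})\Vert$, applies Minkowski's inequality in $L^{p}(\pi)$ so that $f$ only ever appears with exponent $p$ in each term, then applies H\"older with exponents $(p+\delta^{*})/p$ and $(p+\delta^{*})/\delta^{*}$ to each term --- stationarity making $\E_{\pi}\Vert f(X_{i})\Vert^{p+\delta^{*}}$ constant in $i$ --- and finally uses Markov's inequality, the condition $\phi > p(p+\delta^{*})/\delta^{*}$ being exactly what makes $\sum_{i} \left({\Pr}_{\pi}(\tau_{1} \ge i)\right)^{\delta^{*}/(p(p+\delta^{*}))}$ converge. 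In short, the $\delta^{*}$ slack must be reserved for decoupling the random horizon from the summands, not absorbed into the moment of $f$.
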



    \begin{lemma}
		\label{lm:geom_f}
		Let $\{X_t\}_{t\geq 1}$ be a $\pi$-Harris ergodic Markov chain so that \eqref{eq:mino} holds. Further, let $\{X_n\}_{n \geq 1}$ be geometrically ergodic and $\E_{\pi}\left[\left(\Vert f(X) \Vert\right)^{p+\delta^{*}}\right] < \infty$ for some $p > 1$ and $\delta^{*} > 0$. Then
		\begin{equation*}
			\E_{\pi}\left[\left(\sum_{i=1}^{\tau_{1}} \Vert f(X_{i}) \Vert\right)^{p}\right] < \infty.
		\end{equation*}
	\end{lemma}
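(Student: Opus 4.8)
The plan is to obtain Lemma~\ref{lm:geom_f} as an immediate consequence of Lemma~\ref{lm:geom} and Lemma~\ref{lm:sum_exp}, with geometric ergodicity playing the single role of supplying arbitrarily high moments of the regeneration time. Lemma~\ref{lm:sum_exp} already performs the substantive work --- it trades a $(p+\delta^{*})$-moment bound on $\Vert f(X) \Vert$ and a sufficiently large $\phi$-moment bound on $\tau_{1}$ for the desired $p$-th moment bound on the tour sum --- so the proof reduces to checking that its hypotheses are met.

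First I would fix the $p > 1$ and $\delta^{*} > 0$ from the statement, so that $\E_{\pi}\left[\Vert f(X) \Vert^{p+\delta^{*}}\right] < \infty$ holds by assumption. The threshold appearing in Lemma~\ref{lm:sum_exp}, namely $p(p+\delta^{*})/\delta^{*}$, is a fixed finite number, so I would choose any $\phi$ exceeding it. By Lemma~\ref{lm:geom}, geometric ergodicity yields $\E_{Q}(\tau_{1}^{q}) < \infty$ for every $q > 1$, i.e.\ $\tau_{1}$ possesses finite moments of all finite orders. This is precisely what makes the geometric case cleaner than the polynomial one of Lemma~\ref{lm:poly}: there the order $\xi$ caps the available moments and forces a constraint among $\xi$, $p$, and $\delta^{*}$, whereas here the required $\phi$ --- however large --- is always available. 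With $\E_{\pi}(\tau_{1}^{\phi}) < \infty$ and $\E_{\pi}\left[\Vert f(X) \Vert^{p+\delta^{*}}\right] < \infty$ in hand, applying Lemma~\ref{lm:sum_exp} delivers $\E_{\pi}\left[\left(\sum_{i=1}^{\tau_{1}} \Vert f(X_{i}) \Vert\right)^{p}\right] < \infty$, which is the claim.

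The one point demanding care --- and the main obstacle --- is the passage from the $Q$-initialized moments furnished by Lemma~\ref{lm:geom} to the $\pi$-initialized moment $\E_{\pi}(\tau_{1}^{\phi})$ required by Lemma~\ref{lm:sum_exp}, since the two lemmas are stated under different initial distributions. I would handle this either by observing that the tail argument underlying Lemma~\ref{lm:geom} applies verbatim when the chain is started from $\pi$, or by invoking the standard length-biasing comparison $\E_{\pi}(\tau_{1}^{\phi}) \le C\,\E_{Q}(\tau_{1}^{\phi+1})$ for a finite constant $C$; either way the right-hand side is finite for every $\phi$ by Lemma~\ref{lm:geom}, so the transfer costs nothing. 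Beyond making this initialization bridge explicit, no further estimates are needed.
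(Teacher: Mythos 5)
Your proposal is correct and follows essentially the same route as the paper: the paper's proof of Lemma~\ref{lm:geom_f} likewise invokes the $\pi$-initialized moment bound $\E_{\pi}[\tau_1^{q}] < \infty$ for all $q > 1$ (equation~\eqref{eq:tau-2}, established inside the proof of Lemma~\ref{lm:geom} via \citet[Proposition 3.1]{samur2004regularity}) and then runs the argument of Lemma~\ref{lm:sum_exp} with $\phi$ chosen above the threshold $p(p+\delta^{*})/\delta^{*}$. Your observation that the tail argument underlying Lemma~\ref{lm:geom} directly yields the $\pi$-moments (rather than needing a length-biasing transfer from $\E_{Q}$) is exactly how the paper resolves the initialization issue.
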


    
    The above lemmas allow the following result.
	\begin{theorem}
		\label{thm:mcmc_sip}
		Let $\{X_t\}_{t\geq 1}$ be a $\pi$-Harris ergodic Markov chain and thus \eqref{eq:mino} holds. Suppose the chain is either
		\begin{enumerate}[(a)]
			\item polynomially ergodic of order $\xi > (2 + \delta)(1 + (2 + \delta)/\delta^{*})$ for some $\delta > 0$ and $\delta^{*} > 0$ and $\E_{\pi}\left(\Vert f(X) \Vert^{2+\delta+\delta^{*}}\right) < \infty$ ; or,
			\item  geometrically ergodic and $\E_{\pi}\left(\Vert f(X) \Vert^{2+\delta+\delta^{*}}\right)  < \infty$ for some $\delta > 0$ and $\delta^{*} > 0$,
		\end{enumerate}	
		then \eqref{eq:main_sip} holds with $\beta = \max\{ 1/{(2+\delta)}, 1/4\}$.
	\end{theorem}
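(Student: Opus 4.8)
The plan is to derive Theorem~\ref{thm:mcmc_sip} as a direct consequence of Theorem~\ref{Ths} by verifying its two conditions (a) and (b) under each ergodicity hypothesis, and then showing that the term $1/2p$ never binds, so that $\beta = \max\{1/(2+\delta), 1/2p, 1/4\}$ collapses to $\max\{1/(2+\delta), 1/4\}$. The first observation we would record is that the regenerative ratio formula gives $\eta/\mu = \E_\pi[f(X)]$, so that the centering $n\,\E_\pi[f(X)]$ in \eqref{eq:main_sip} is the right one and condition \eqref{eq:assm-2} is genuinely a statement about the centered function $\tilde f := f - \E_\pi[f(X)]$. Since $\E_\pi[f(X)]$ is a fixed vector, the triangle inequality together with Jensen's inequality shows $\E_\pi[\Vert \tilde f(X)\Vert^{2+\delta+\delta^*}] < \infty$ whenever $\E_\pi[\Vert f(X)\Vert^{2+\delta+\delta^*}] < \infty$; thus in both cases we may feed $\tilde f$ into the moment lemmas with exponent $p = 2+\delta$.

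For the geometric case (b), Lemma~\ref{lm:geom} gives $\E_Q(\tau_1^p) < \infty$ for every $p > 1$, so we fix any $p \ge 2$: this verifies condition (a) of Theorem~\ref{Ths} and forces $1/2p \le 1/4$, removing it from the maximum. Applying Lemma~\ref{lm:geom_f} to $\tilde f$ with exponent $2+\delta$ and the given $\delta^*$ (its hypothesis $\E_\pi[\Vert \tilde f(X)\Vert^{2+\delta+\delta^*}] < \infty$ holds by the previous paragraph) yields $\E_\pi[(\sum_{t=1}^{\tau_1}\Vert \tilde f(X_t)\Vert)^{2+\delta}] < \infty$, which is exactly \eqref{eq:assm-2}. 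Theorem~\ref{Ths} then delivers \eqref{eq:main_sip} with $\beta = \max\{1/(2+\delta), 1/4\}$.

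For the polynomial case (a), Lemma~\ref{lm:poly} gives $\E_Q(\tau_1^p) < \infty$ for all $p \in (0,\xi)$; since the hypothesised order satisfies $\xi > (2+\delta)(1+(2+\delta)/\delta^*) > 2$, we may choose $p \in [2,\xi)$, which verifies condition (a) and again ensures $1/2p \le 1/4$. For condition (b) we appeal to Lemma~\ref{lm:sum_exp} with $\tilde f$ and $p = 2+\delta$: its moment requirement on the function is met as above, and its requirement $\E_\pi(\tau_1^\phi) < \infty$ for $\phi > p(p+\delta^*)/\delta^* = (2+\delta)(1+(2+\delta)/\delta^*)$ is precisely the content of the ergodicity order threshold, so that such a $\phi < \xi$ exists. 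This produces \eqref{eq:assm-2}, and Theorem~\ref{Ths} again yields \eqref{eq:main_sip} with $\beta = \max\{1/(2+\delta), 1/4\}$.

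The step we expect to be most delicate is the $\tau$-moment bookkeeping in the polynomial case. Lemma~\ref{lm:sum_exp} requires a bound on $\E_\pi(\tau_1^\phi)$, whereas Lemma~\ref{lm:poly} is phrased under $\E_Q$, so we must confirm that polynomial ergodicity of order $\xi$ controls the tour-length moment under the relevant initial law up to order just below $\xi$ (the argument behind Lemma~\ref{lm:poly} supplies this), and check that the exponent $p(p+\delta^*)/\delta^*$ appearing in Lemma~\ref{lm:sum_exp} equals $(2+\delta)(1+(2+\delta)/\delta^*)$ when $p = 2+\delta$, so that the single hypothesis $\xi > (2+\delta)(1+(2+\delta)/\delta^*)$ simultaneously secures both conditions of Theorem~\ref{Ths} with room to spare. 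By contrast, the geometric case is comparatively immediate, since all moments of $\tau_1$ are finite and no such threshold matching is required.
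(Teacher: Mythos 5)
Your proposal is correct and follows essentially the same route as the paper: verify condition (a) of Theorem~\ref{Ths} via Lemma~\ref{lm:poly} or Lemma~\ref{lm:geom} (choosing $p \geq 2$ so that $1/2p \leq 1/4$ drops out of the maximum), verify condition \eqref{eq:assm-2} via Lemma~\ref{lm:sum_exp} or Lemma~\ref{lm:geom_f} using the exponent identity $(2+\delta)(2+\delta+\delta^*)/\delta^* = (2+\delta)(1+(2+\delta)/\delta^*)$, and conclude from Theorem~\ref{Ths}. The only cosmetic difference is that you apply the moment lemmas directly to the centered function $\tilde f = f - \eta/\mu$ (justified by Minkowski's inequality), whereas the paper applies them to $f$ and then passes to the centered sum; both are valid.
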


	\begin{proof}
		See Appendix~\ref{sec:theorem2}.
	\end{proof}

	\begin{remark}
		Theorem~\ref{thm:mcmc_sip} presents reasonable and verifiable conditions for the existence of an SIP. Similar conditions (slightly weaker in the case of polynomial ergodicity) are also sufficient for the existence of a CLT \citep[see][]{jones2004markov}. Theorem~\ref{thm:mcmc_sip} marks a three-fold improvement over the existing results of \cite{jones2006fixed} in MCMC; (i) the assumption of a $1$-step minorization is completely removed, (ii) the inclusion of an explicit result for polynomially ergodic Markov chains, and (iii) the critical extension to multivariate functionals. 
	\end{remark}

    \begin{remark}
        For any Harris ergodic Markov chain, an $l$-step minorization always holds for some $l \geq 1$. Hence wide-sense regeneration exists for the underlying Markov chain. Consequently the one-dependent random sequences $\{(Z_{t}, \tau_{t})\}_{t \ge 1}$ can be constructed for any $l \geq 2$. Different $l$ yields different covariance structure $\Sigma_{Z}$ and different $\mu$ in such a way that the quantity $\Sigma_{Z} / \mu$ remains same. This allows for a larger class of representations of the asymptotic covariance matrix. Detecting wide-sense regenerations in real applications can be quite challenging. Nevertheless, in Section~\ref{sec:regen} and Appendix~\ref{appendix1}, we present its theoretical framework.
        
    \end{remark}
	
	\section{Application to MCMC variance estimation}

    \subsection{Batch means estimator}
	\label{sec:practical}
	
	Having generated the process $\{X_t\}_{t=1}^{n}$ through an MCMC algorithm, the samples are employed to estimate $\E_{\pi}[f(X)]$ via the Monte Carlo average since,
	\[
	\hat{f}_n: = \frac{1}{n} \sum_{t=1}^{n} f(X_{t}) \overset{a.s.}{\rightarrow} \E_{\pi}[f(X)] \qquad  \text{as }  n \rightarrow \infty\,.
	\]
	When a Markov chain CLT  holds for $\hat{f}_n$, there exists a positive-definite matrix $\Sigma_{f}$ such that as $n \to \infty$,
	\begin{equation}
		\sqrt{n}\left( \hat{f}_n - \E_{\pi}[f(X)]\right) \overset{\text{d}}{\rightarrow} \text{N} (0, \Sigma_{f})\,, \label{eq:MCLT}
	\end{equation} 
	where 
	\[
	\Sigma_{f} = \Var_{\pi}[f(X)] + \sum_{s=1}^{\infty}  \left[\cov_{\pi}(f(X_{1}), f(X_{1+s})) + \cov_{\pi}(f(X_{1}), f(X_{1+s}))^{\top}   \right]\,.
	\]
	See \cite{jones2004markov} for sufficient conditions for a Markov chain CLT to hold. We note that Theorem~\ref{thm:mcmc_sip} also implies a Markov chain CLT and thus $\Sigma_f = \Sigma_Z/\mu$.

	Estimation of $\Sigma_f$ is widely discussed, both in the univariate case \citep{berg2022efficient,chak:khare:bhatt:2022,damerdji1991strong,geyer1992practical,damerdji1995mean,jones2006fixed,flegal2010batch}, and the multivariate case  \citep{kosorok2000monte,dai2017multivariate,seila1982multivariate,vats2018strong,vats2019multivariate,lugsail2021vats}. Estimators of $\Sigma_f$ are employed in deciding when to stop the simulation. Thus, the MCMC simulation stops at a random time and \cite{glynn1992asymptotic} show that validity of the subsequent estimators require strong consistency of estimators of $\Sigma_f$. Much effort has thus gone into ensuring that estimators of $\Sigma_f$ are strongly consistent. One particular estimator that stands out due its computational efficiency and theoretical underpinnings, is the batch-means estimator.
	
	An existence of a multivariate SIP is assumed for the strong consistency of the batch-means estimators \citep{vats2019multivariate}. \cite{vats2018strong} showed that if the Markov chain is polynomially ergodic, then for some unknown $0< \lambda < 1/2$ a multivariate SIP holds with rate $n^{1/2 - \lambda}$. Strong consistency also depends on choosing appropriate values for the tuning parameter, the batch size. These values depend on the rate $\lambda$, which is unknown, thus making it difficult to verify the conditions. Theorem~\ref{thm:mcmc_sip} overcomes this issue considerably as we will now elucidate.
	
	Let the Monte Carlo sample size $n = a_n b_n$ where $a_n$ denotes the number of batches and $b_n$ is the batch size. 
	For $k = 1, \dots, a_n$ define the mean vector of the $k^{\text{th}}$ batch as, $\bar{f}_{k} = \left(b_n^{-1}\sum_{t= (k-1)b_{n}+1}^{k b_{n}} f(X_{t})\right)$. Then the batch-means estimator of $\Sigma_f$ is
	\begin{equation*}
		\hat{\Sigma}_{\text{BM}} = \frac{b_{n}}{a_{n}-1} \sum_{k=1}^{a_{n}} (\bar{f}_{k} - \hat{f}_n)(\bar{f}_{k} - \hat{f}_n)^{\top}.
	\end{equation*}
	
	\begin{assum}
		\label{ass:bn}
		The batch size $b_n$ is such that
		\begin{enumerate}[(a)]
			\item $b_n \to \infty$ and $n/b_n \to \infty$ as $n \to \infty$ where, $b_n$ and $n/b_n$ are monotonically increasing,
			\item there exists a constant $c \geq 1$ such that $\sum_n (b_n n^{-1})^c < \infty$.
		\end{enumerate}
	\end{assum}
	
	Often $b_n = \lfloor n^{\nu} \rfloor$ for some $\nu > 0$ so that Assumption~\ref{ass:bn} is trivially satisfied. Common choices in the literature are $b_n = \lfloor n^{1/3} \rfloor$ and $b_n = \lfloor n^{1/2} \rfloor$. The following theorem from \cite{vats2019multivariate} presents the conditions for strong consistency of the batch-means estimator.
	\begin{theorem}[\cite{vats2019multivariate}]
		\label{thm:bm_cons}
		Suppose $f$ is such that $\E_{\pi}\left(\Vert f(X) \Vert^{2 + \delta}\right) < \infty$ for some $\delta > 0$ and let the Markov chain be polynomially ergodic of order $\xi > (1 + \epsilon)(1 + 2/\delta)$ for some $\epsilon > 0$. If $b_n$ satisfies Assumption~\ref{ass:bn} and $b_n^{-1}\log(n) \kappa^2(n) \to 0$ as $n\to \infty$, then $\hat{\Sigma}_{\text{BM}} \to \Sigma$  with probability 1 as $n \to \infty$. 
	\end{theorem}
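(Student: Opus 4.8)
The plan is to reduce the problem to the asymptotics of a batch-means estimator built from a single Wiener path, where exact Gaussian computations are available, and then to control the discrepancy via the strong invariance principle. The stated polynomial-ergodicity order and the moment bound $\E_{\pi}(\Vert f(X)\Vert^{2+\delta}) < \infty$ are exactly the conditions under which a multivariate SIP with rate $\kappa$ holds \citep{vats2018strong}; so on a suitable probability space we may write $\sum_{t=1}^{m} f(X_t) - m\,\E_{\pi}[f(X)] = L\,W(m) + R_m$, with $\Sigma = LL^{\top}$ and $\Vert R_m\Vert = \mathcal{O}(\kappa(m))$ almost surely. Writing $u_k = \sum_{t=(k-1)b_n+1}^{kb_n}(f(X_t)-\E_{\pi}[f(X)])$ and $\bar u = a_n^{-1}\sum_k u_k$, a short algebraic manipulation gives $\bar f_k - \hat f_n = b_n^{-1}(u_k - \bar u)$, so that $\hat\Sigma_{\text{BM}} = \frac{1}{b_n(a_n-1)}\sum_{k=1}^{a_n}(u_k-\bar u)(u_k-\bar u)^{\top}$. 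Setting $B_k = W(kb_n)-W((k-1)b_n)$ and $\gamma_k = R_{kb_n}-R_{(k-1)b_n}$, the SIP yields $u_k = LB_k + \gamma_k$, and expanding the outer products decomposes $\hat\Sigma_{\text{BM}}$ into a Gaussian term $\tilde\Sigma_{\text{BM}} := \frac{1}{b_n(a_n-1)}\sum_k L(B_k-\bar B)(B_k-\bar B)^{\top}L^{\top}$, a cross term, and a remainder term involving the centered $\gamma_k$.

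The first main step is to show $\tilde\Sigma_{\text{BM}} \to \Sigma$ almost surely. Since $\sum_k(B_k-\bar B)(B_k-\bar B)^{\top} = \sum_k B_k B_k^{\top} - a_n\bar B\bar B^{\top}$ with $a_n\bar B\bar B^{\top} = W(n)W(n)^{\top}/a_n$, the law of the iterated logarithm bounds the centering correction by $\mathcal{O}(b_n\log\log n / n)\to 0$, so it suffices to prove $D_n := \frac{1}{b_n a_n}\sum_k (B_k B_k^{\top} - b_n I_d)\to 0$ almost surely, after which conjugation by $L$ turns the limit $I_d$ into $LL^{\top} = \Sigma$. For each fixed $n$ the summands are i.i.d.\ mean-zero matrices of Gaussian scale $b_n$, but the batches are re-formed from one Wiener path as $n$ grows, so a pathwise argument is required rather than a plain strong law. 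I would apply Markov's inequality at exponent $2c$ (with $c$ as in Assumption~\ref{ass:bn}(b)) together with a Rosenthal-type moment bound to obtain $\E\Vert D_n\Vert^{2c} \le C\,[a_n^{-c} + a_n^{1-2c}] \le C\,a_n^{-c} = C(b_n/n)^c$, the last inequality using $c \ge 1$. Assumption~\ref{ass:bn}(b) then makes $\sum_n \Pr(\Vert D_n\Vert > \epsilon) < \infty$, and Borel--Cantelli delivers the almost-sure convergence. Securing this summable moment bound uniformly along the full index sequence is the technical crux of the argument.

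The second step is to show the cross and remainder terms vanish almost surely. Since $\kappa$ is increasing and $kb_n \le n$, the SIP bound gives $\max_k\Vert\gamma_k\Vert = \mathcal{O}(\kappa(n))$, so the remainder term is $\mathcal{O}(\kappa^2(n)/b_n)$. For the cross term I would bound its norm by $\frac{1}{b_n a_n}\,\max_k\Vert B_k-\bar B\Vert\,\sum_k\Vert \gamma_k-\bar\gamma\Vert$ and invoke the almost-sure modulus-of-continuity estimate $\max_k\Vert B_k\Vert = \mathcal{O}(\sqrt{b_n\log n})$ for increments of a Wiener process over windows of length $b_n$, giving a bound of order $\sqrt{\log(n)\,\kappa^2(n)/b_n}$. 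The hypothesis $b_n^{-1}\log(n)\kappa^2(n)\to 0$ forces both this quantity and $\kappa^2(n)/b_n$ to zero, while $a_n/(a_n-1)\to 1$ is harmless. Combining the three steps yields $\hat\Sigma_{\text{BM}}\to\Sigma$ almost surely. The main obstacle is the pathwise Brownian step: unlike the error terms, which are handled by deterministic bounds once the SIP and the Gaussian maximal inequality are in hand, the convergence of $\tilde\Sigma_{\text{BM}}$ needs the higher-moment estimate and the summability of Assumption~\ref{ass:bn}(b) working in tandem to defeat the lack of independence across the index $n$.
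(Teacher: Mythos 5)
This theorem is stated in the paper as an imported result from \cite{vats2019multivariate} and is given no proof here, so there is nothing internal to compare against; judged against the argument in that cited source, your proposal is correct and follows essentially the same route: reduce to the Brownian batch-means estimator via the SIP of \cite{vats2018strong}, prove its almost sure convergence by Gaussian moment bounds plus Markov and Borel--Cantelli (this is exactly where Assumption~\ref{ass:bn}(b) with $c \geq 1$ enters), and kill the cross and remainder terms using the increment bound $\mathcal{O}(\sqrt{b_n \log n})$ for the Wiener path together with the hypothesis $b_n^{-1}\log(n)\kappa^2(n) \to 0$. The only detail worth tightening is your claim that the centering correction $\mathcal{O}(b_n \log\log n/n)$ vanishes: this does not follow from Assumption~\ref{ass:bn}(a) alone, but it does follow from part (b) combined with the monotonicity of $n/b_n$, since a summable monotone sequence is $\smallO(1/n)$, giving $b_n/n = \smallO(n^{-1/c})$.
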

	For the univariate case, \cite{jones2006fixed} assumed the same conditions on the batch size. For geometrically ergodic Markov chains under a $1$-step minorization, \cite{jones2006fixed} showed a univariate SIP holds with $\kappa(n) = n^{\beta} \log n$, where $\beta$ is the same as Theorem~\ref{thm:mcmc_sip}. This implies that the batch size should be chosen such that $\nu > \max\{2/(2 + \delta), 1/2 \}$. For geometrically ergodic Markov chains, this is the best known batch size condition. We note, as did \cite{damerdji1991strong,damerdji1995mean}, that this condition excludes common choices of $b_n$.

	For the multivariate case, the only known MCMC SIP result was that of \cite{vats2018strong} with $\kappa(n) = n^{1/2 - \lambda}$ for some unknown $0 < \lambda < 1/2$. This implies a choice of $\nu > 1 - 2\lambda$ but since $\lambda$ is unknown, the condition cannot be verified. On the other hand, using Theorem~\ref{thm:mcmc_sip}, we obtain the condition that $\nu > \max\{2/(2 + \delta), 1/2 \}$. This is the same condition as the univariate case, but now only requiring polynomial ergodicity.

    \subsection{Regenerative estimator}
	\label{sec:reg_est}
	Regenerations, particularly wide-sense regenerations, are notoriously difficult to identify. Nevertheless, it is natural to consider a wide-sense regenerative estimator of $\E_{\pi}[f(X)]$ and $\Sigma_{f}$, even if it is for theoretical exposition. For $1$-step minorization, \cite{hobert2002applicability,seila1982multivariate} present regenerative estimators, and for univariate wide-sense regenerative processes, \cite{henderson2001regenerative} provide an estimator of $\E_{\pi}[f(X)]$. 
	
	Denote the number of regenerations by $R$. By a strong law of large numbers for $1$-dependent processes, $T_R/R \overset{a.s.}{\to} \mu$ as $R \to \infty$. Further as $R \to \infty$, by Lemma~\ref{lm:Ef_eta-mu} in Appendix~\ref{app:pre_lemma},
	\begin{equation}
		\label{eq:regen_est}
		\tilde{f}_R:= \dfrac{1}{T_R} \sum_{t=1}^{T_{R}} f(X_{t}) = \dfrac{ \sum_{j=1}^{R} Z_j}{\sum_{j=1}^{R} \tau_j} \overset{a.s.}{\to} \dfrac{\eta}{\mu} = \E_{\pi}[f(X)] \,.
	\end{equation}
	By using Slutsky's theorem,
	\begin{align*}
		\sqrt{R}\left(\tilde{f}_R - \E_{\pi}[f(X)] \right)  &= \sqrt{R}\left(\dfrac{ \sum_{j=1}^{R} Z_j}{  \sum_{j=1}^{R} \tau_j} - \E_{\pi}[f(X)] \right)\\
		& =  \dfrac{R}{T_R} \dfrac{1}{\sqrt{R}} \left(\sum_{j=1}^{R} Z_j - T_R \E_{\pi}[f(X)]  \right)\\ 
		& = \frac{R}{T_{R}} \frac{1}{\sqrt{R}} \sum_{j=1}^{R}\left(Z_{j} - \tau_{j}\E_{\pi}[f(X)] \right) \\
		& \overset{d}{\to} N_d(0, \Sigma_f/\mu)\,.\numberthis \label{eq:Z_j_partial_sum}
	\end{align*}
	Thus, if regenerations can be identified, the regenerative estimator in \eqref{eq:regen_est} can be used in place of $\hat{f}_n$. Regenerative estimators of $\E_{\pi}[f(X)]$ from $1$-step minorization have been employed by \cite{mykland1995regeneration,roy2007convergence,chen2018mcmc}. 
 
    For an observed Markov chain of length $n$, the estimation of the asymptotic variance $\Sigma_{f}$ in \eqref{eq:Z_j_partial_sum} is  done by estimating $\Sigma_{Z}$ as described in \eqref{eq:sigma_z} and $\mu$ separately. Define $\bar{Z} := R^{-1}\sum_{i=1}^{R} Z_{i}$. We can estimate $\mu$ and $\Sigma_{Z}$ with,
    \begin{align*}
    	  \hat{\mu} &:= \frac{1}{R} \sum_{i=1}^{R} \tau_{i} \,, \qquad{\text{and}} \numberthis \label{est:mu} \\ 
	      \hat{\Sigma}_{Z} & := \frac{1}{R} \sum_{i=1}^{R} (Z_{i} - \bar{Z})(Z_{i} - \bar{Z})^{T} + \frac{1}{R} \sum_{i=1}^{R - 1} (Z_{i} - \bar{Z})(Z_{i+1} - \bar{Z})^{T} \\
    & \qquad + \frac{1}{R} \sum_{i=1}^{R - 1} (Z_{i+1} - \bar{Z})(Z_{i} - \bar{Z})^{T} \numberthis \label{est:Sigma_Z}
    \end{align*}
    
    Using \eqref{est:mu} and \eqref{est:Sigma_Z}, the estimator for $\Sigma_{f}$ is $\hat{\Sigma}_{f} = \hat{\Sigma}_{Z} / \hat{\mu}$.

    Employing these estimators in practice has two significant challenges: (i) detecting wide-sense regenerations can be difficult since one requires knowledge of $l$ and the minorization kernel $Q$, and (ii) when minorizations can be established, the bounds in the minorization constant are fairly weak, yielding prohibitively large regeneration times. In Appendix~\ref{appendix1},  we present a general technique of identifying wide-sense regenerations for random scan Gibbs samplers via the Gibbs sampler of \cite{AC1993probit}. As it turns out, the practical implementation fails the theoretical framework since close to no regenerations are identified even in long runs of the chain. This is predominantly due to the weak bound obtained in the minorization constant. Given the framework presented in this section, it would be worthwhile pursuing improvements on this minorization constant in the future.

	\section{Acknowledgments} 
	The authors are grateful to Prof.~Galin Jones, Prof.~James Flegal, Prof.~Jing Dong and Sanket Agrawal for helpful conversations. Dootika Vats is supported by SERB (SPG/2021/001322).

	\appendix
	\section{Some preliminary results}
	\label{app:pre_lemma}
	
	\begin{lemma}
		\label{lm:Ef_eta-mu} Let $\{X_{t}\}_{t\ge 1}$ be a $\pi$-Harris ergodic Markov chain. Recall $f, \eta$ and $\mu$ from Section~\ref{sec:regen}. Then, 
		\[
		\E_{\pi}[f(X)] = \frac{\eta}{\mu}\,.
		\]
	\end{lemma}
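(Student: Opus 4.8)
The plan is to combine two limit theorems and match them along the subsequence of regeneration times: the Birkhoff/Harris ergodic theorem for the chain itself, and a strong law of large numbers for the stationary tour sequence. Throughout I assume $\E_{\pi}\Vert f(X)\Vert < \infty$, which is what makes $\eta = \E_{Q}(Z_1)$ well defined; positive Harris recurrence guarantees $\mu = \E_{Q}(\tau_1) \in (0,\infty)$. The two ingredients I would record are as follows. First, since $\{X_t\}_{t\ge1}$ is $\pi$-Harris ergodic, the ergodic theorem gives $n^{-1}\sum_{t=1}^{n} f(X_t) \overset{a.s.}{\rightarrow} \E_{\pi}[f(X)]$ irrespective of the initial distribution, in particular for the $Q$-initialized split chain. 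Second, as noted in Section~\ref{sec:regen}, $\{(Z_j,\tau_j)\}_{j\ge1}$ is a stationary one-dependent sequence; a stationary, finitely dependent sequence is ergodic (indeed mixing), so Birkhoff's pointwise ergodic theorem applies coordinatewise and yields $R^{-1}\sum_{j=1}^{R} Z_j \overset{a.s.}{\rightarrow} \eta$ and $R^{-1}\sum_{j=1}^{R} \tau_j \overset{a.s.}{\rightarrow} \mu$ as $R\to\infty$.

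Next I would evaluate the Cesàro averages along the subsequence $n = T_R$. Because every tour has length $\tau_j \ge 1$, we have $T_R = \sum_{j=1}^{R} \tau_j \ge R$, so $T_R \to \infty$ deterministically as $R \to \infty$; hence the limit of $n^{-1}\sum_{t=1}^{n} f(X_t)$ restricted to $n = T_R$ must coincide with the full limit $\E_{\pi}[f(X)]$ from the first ingredient. On the other hand, using $T_0 = 0$ and telescoping the tour decomposition $\sum_{t=1}^{T_R} f(X_t) = \sum_{j=1}^{R} Z_j$, the second ingredient together with $\mu > 0$ gives
\[
\frac{1}{T_R}\sum_{t=1}^{T_R} f(X_t) = \frac{R^{-1}\sum_{j=1}^{R} Z_j}{R^{-1}\sum_{j=1}^{R} \tau_j} \overset{a.s.}{\longrightarrow} \frac{\eta}{\mu}\,.
\]
Equating the two almost sure limits of the same sequence yields $\E_{\pi}[f(X)] = \eta/\mu$, which is the claim.

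The one step that needs genuine care is the finiteness claim $\E_{Q}\Vert Z_1\Vert < \infty$ underpinning the ergodicity-based law in the second ingredient: this is exactly the statement that $\pi$-integrability of $f$ transfers to integrability of the tour sum, a classical consequence of positive recurrence (a Kac-type identity) and the degenerate $p=1$ analogue of Lemma~\ref{lm:sum_exp}. I would also be careful not to invoke \eqref{eq:regen_est}, since that display already cites the present lemma; the argument above is self-contained and relies only on the two ergodic theorems. Everything else—the monotone divergence of $T_R$, the ratio-of-limits step (valid since the denominator limit $\mu$ is strictly positive), and uniqueness of the almost sure limit of a convergent sequence along a subsequence—is routine and I would not belabor it.
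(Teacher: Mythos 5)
Your proposal is correct and follows essentially the same route as the paper's proof: both arguments combine the strong law for the stationary one-dependent sequence $\{(Z_j,\tau_j)\}_{j\ge 1}$ with the ergodic theorem for the Markov chain evaluated along the regeneration times $T_R$, and then equate the two almost sure limits of $T_R^{-1}\sum_{t=1}^{T_R} f(X_t)$. The only cosmetic difference is that the paper multiplies through by the denominator $R^{-1}\sum_j \tau_j$ before identifying limits, whereas you take the ratio directly; your added remarks on integrability of the tour sum and ergodicity of finitely dependent sequences are care the paper leaves implicit, not a change of method.
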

	
	\begin{proof}
		\label{pf:lemma-1} By \cite{glynn2011wide}, $\{(Z_{k}, \tau_{k}): k \ge 1\}$ form a stationary $1$-dependent process. By a strong law of large numbers for $1$-dependent processes,
		\begin{align}
			\frac{1}{R} \sum_{i=1}^{R} Z_{i} & \overset{a.s.}{\rightarrow} \E_{Q} (Z_{1}) \text{        as      }R \rightarrow \infty \ \ \text{and,} \label{eq:cr1} \\
			\frac{1}{R} \sum_{i=1}^{R} \tau_{i} & \overset{a.s.}{\rightarrow} \E_{Q} (\tau_{1}) \text{        as      }R \rightarrow \infty \label{eq:cr2}.
		\end{align}
		By a strong law for ergodic Markov chains and from \eqref{eq:cr2},
		\begin{align*}
			& \frac{1}{T_{R}} \sum_{t=1}^{T_{R}} f(X_{t}) = \frac{1/R \sum_{i=1}^{R} Z_{i}}{1/R \sum_{i=1}^{R} \tau_{i}} \overset{a.s.}{\rightarrow} \E_{\pi}[f(X)]  \text{        as      }R \rightarrow \infty \\
			\Rightarrow & \left(1/R \sum_{i=1}^{R} \tau_{i} \right) \frac{1/R \sum_{i=1}^{R} Z_{i}}{1/R \sum_{i=1}^{R} \tau_{i}} \overset{a.s.}{\rightarrow} \E_{Q}(\tau_{1}) \E_{\pi}[f(X)]  \text{        as      }R \rightarrow \infty\\
			\Rightarrow & \left(1/R \sum_{i=1}^{R} Z_{i} \right) \overset{a.s.}{\rightarrow} \E_{Q}(\tau_{1}) \E_{\pi}[f(X)]  \text{        as      }R \rightarrow \infty \numberthis \label{eq:cr3}.
		\end{align*}
		Thus, by \eqref{eq:cr1} and \eqref{eq:cr3},
		\begin{equation}
			\E_{Q} (Z_{1}) = \E_{Q}(\tau_{1}) \E_{\pi}[f(X)]  \Rightarrow \E_{\pi}[f(X)] = \frac{\eta}{\mu}  \label{expectation}.
		\end{equation}
	\end{proof}
	
	The following lemma will be employed for the proof of Theorem~\ref{thm:mcmc_sip} and is an extension of \citet[Lemma 1]{hobert2002applicability} to the multivariate and  $l$-step minorization case.
	\begin{lemma}
		\label{lm:hobertext}
		Let $\{X_t\}_{t\geq 1}$ be a $\pi$-Harris ergodic Markov chain so that \eqref{eq:mino} holds.  Then, for any measurable function $\Psi: \mathcal{X}^{\infty} \rightarrow \mathbb{R}^{d}$
		\begin{equation}
			\E_{\pi}\Vert \Psi(X_{1}, X_{2}, X_{3}, \ldots)\Vert \geq \E_{\pi}[h(X)] \E_{Q}\Vert \Psi(X_{1}, X_{2}, X_{3}, \ldots)\Vert\,.
		\end{equation}
	\end{lemma}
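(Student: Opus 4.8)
The plan is to reduce the statement about a functional of the entire trajectory to a pointwise domination between the two initial laws $\pi$ and $Q$. Since the quantity of interest is the nonnegative scalar $\Vert \Psi(X_1, X_2, \ldots)\Vert$, the multivariate nature of $\Psi$ plays no role beyond taking the Euclidean norm, so once the right measure inequality is in hand the argument is identical to the one-dimensional case. Concretely, I would introduce
\[
\Phi(y) := \E\left[\, \Vert \Psi(X_1, X_2, \ldots)\Vert \,\middle|\, X_1 = y\,\right],
\]
the conditional expectation of the path functional computed along the chain evolving under the one-step kernel $P$. By the Markov property this function depends only on $y$ and $P$, and not on the law assigned to $X_1$; hence for any initial distribution $F$ (under which $X_1 \sim F$ and $X_{t+1}\mid X_t \sim P(X_t,\cdot)$) the tower rule gives $\E_{F}\Vert\Psi\Vert = \int_{\X}\Phi(y)\,F(dy)$. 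Applying this with $F=\pi$ and $F=Q$ recasts the claim as $\int_{\X}\Phi\,d\pi \ge \E_{\pi}[h(X)]\int_{\X}\Phi\,dQ$, and since $\Phi\ge 0$ it suffices to prove the measure inequality $\pi(\cdot) \ge \E_{\pi}[h(X)]\,Q(\cdot)$.

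The second step is to establish that domination. Because $\pi P = \pi$ we also have $\pi P^{l}=\pi$, so $\pi(A)=\int_{\X}\pi(dx)\,P^{l}(x,A)$ for every $A \in \mathcal{B}(\X)$. Substituting the split representation \eqref{eq:split} of the $l$-step kernel and interchanging integrals by Tonelli's theorem (all integrands being nonnegative) yields
\[
\pi(A) = \left(\int_{\X} h(x)\,\pi(dx)\right) Q(A) + \int_{\X} \bigl(1 - h(x)\bigr) R(x, A)\,\pi(dx).
\]
The final term is the integral of a nonnegative quantity and is therefore itself a nonnegative measure in $A$; discarding it gives $\pi(A) \ge \E_{\pi}[h(X)]\,Q(A)$ for all $A$, which is exactly the required domination. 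This is the single place where the $l$-step minorization \eqref{eq:mino} is used: stationarity of $\pi$ under $P^{l}$ is what converts the split into a comparison between $\pi$ and the minorization measure $Q$, and the argument is insensitive to the value of $l$.

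Combining the two steps finishes the proof: integrating the nonnegative $\Phi$ against the measure inequality $\pi \ge \E_{\pi}[h(X)]\,Q$ produces $\E_{\pi}[\Phi] \ge \E_{\pi}[h(X)]\,\E_{Q}[\Phi]$, which is the asserted bound (understood in $[0,\infty]$, so that no finiteness of the expectations need be assumed). The point requiring the most care, rather than a genuine obstacle, is the first step: justifying that $\Phi$ is a well-defined regular conditional expectation common to both initial laws, and that the tower-rule identity $\E_{F}\Vert\Psi\Vert=\int\Phi\,dF$ is valid for a measurable functional of the infinite trajectory. This is a standard consequence of the Markov property together with the existence of regular conditional distributions on the path space $\X^{\infty}$, and I would invoke it directly rather than reproving it.
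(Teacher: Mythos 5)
Your proposal is correct and follows essentially the same route as the paper's own proof: both establish the measure domination $\pi(\cdot) \ge \E_{\pi}[h(X)]\,Q(\cdot)$ from stationarity of $\pi$ under $P^{l}$ together with the minorization (your use of the split \eqref{eq:split} and discarding of the residual term is equivalent to the paper's direct use of \eqref{eq:mino}), and then integrate the nonnegative conditional expectation of $\Vert\Psi\Vert$ given $X_{1}$ against this inequality. Your explicit remark that the conditional expectation $\Phi$ is the same function under both initial laws, by the Markov property, is a point the paper leaves implicit, but it is the same argument.
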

	
	\begin{proof}
		For all $A \in \mathcal{B}(\mathcal{X})$ and $x \in \mathcal{X}$,
		\begin{equation}
			\pi(A) = \left(\pi P^{l} \right)(A) = \int_{\mathcal{X}} \pi(dx) P^{l}(x, A) \ge  Q(A) \int_{\mathcal{X}} h(x) \pi(x) = Q(A) \E_{\pi}[h(X)]. \label{eq:pi_Q_inq}
		\end{equation}
		By taking conditional expectation over $X_{1}$
		\begin{equation}
			\E_{\pi}\Vert \Psi(X_{1}, X_{2}, X_{3}, \ldots)\Vert = \E_{\pi}\left(\E\{\Vert \Psi(X_{1}, X_{2}, X_{3}, \ldots)\Vert \mid X_{1}\}\right) \label{eq:conditional}.
		\end{equation}
		
		Since $\E\{\Vert \Psi(X_{1}, X_{2}, X_{3}, \ldots)\Vert \mid X_{1}\}$ is a positive function of $X_{1}$, by \eqref{eq:pi_Q_inq} and \eqref{eq:conditional}
		\begin{align*}
			\E_{\pi}\Vert \Psi(X_{1}, X_{2}, X_{3}, \ldots)\Vert & = \int_{\mathcal{X}} \E\{\Vert \Psi(X_{1}, X_{2}, X_{3}, \ldots)\Vert \mid X_{1} = x\} \pi(dx) \\ 
			& \ge \E_{\pi}[h(X)] \int_{\mathcal{X}} \E\left(\Vert \Psi(X_{1}, \ldots)\Vert \mid X_{1} = x\right) Q(dx) \\
			&  = \E_{\pi}[h(X)] \E_{Q}\Vert \Psi(X_{1}, \ldots)\Vert\\
			\Rightarrow \E_{\pi}\Vert \Psi(X_{1}, \ldots)\Vert & \ge  \E_{\pi}[h(X)] \E_{Q}\Vert \Psi(X_{1}, \ldots)\Vert.
		\end{align*}
	\end{proof}
	
	\section{Proof of Theorem~\ref{Ths}}
	\begin{proof}
		\label{Thp} We will start by showing some moment properties of the sequence of regeneration times in order to prove strong convergence. Denote the number of regenerations in a sample of size $n$ by: 
		\[
		\xi(n) := \sup\{k \ge 1: T_{k}\leq n\} = \inf\{k \ge 1: T_{k+1} > n \}\,.
		\]
		
		By \cite{glynn2011wide}, $\{ (Z_{k}, \tau_{k}): k \ge 1\}$ is a stationary $1$-dependent process. Further, by the conditions in the theorem, $\E_{Q}(\tau_{1}^{p}) < \infty$ for $p > 1$. Define $p' = \min\{2, p\}$ for $p > 1$. Thus, $\E_{Q}(\tau_{1}^{p'}) < \infty$. By a Marcinkiewicz-Zygmund strong law of large numbers for 1-dependent process \citep[see][for e.g.]{samur2004regularity}, for $1 < p' \leq 2$,
		\begin{equation}
			\left| \sum_{i=1}^{\xi(n)} \tau_{i} - \xi(n) \mu \right| =  \vert T_{\xi(n)} - \xi(n) \mu \vert \overset{a.s.}{=} \smallO (\xi(n)^{1/p'})\, . \label{eq:bound_regtime}	
		\end{equation}

		With $n < T_{\xi(n)+1}$,  subtracting $\xi(n)\mu$ from both side
		\begin{align*}
			n - \xi(n)\mu < T_{\xi(n)+1} - (\xi(n)+1)\mu + \mu.
        \end{align*}
        Now, using \eqref{eq:bound_regtime}, $T_{\xi(n)+1} - (\xi(n)+1)\mu \overset{a.s.}{=} \mathcal{O}((\xi(n)+1)^{1/p'})$  and hence $T_{\xi(n)+1} - (\xi(n)+1)\mu + \mu \overset{a.s.}{=} \mathcal{O}((\xi(n)+1)^{1/p'})$. Thus
        \begin{align*}
			 n - \xi(n)\mu \overset{a.s.}{=} & \mathcal{O}((\xi(n)+1)^{1/p'})\\
			\Rightarrow n - \xi(n)\mu \overset{a.s.}{=} & \mathcal{O}(n^{1/p'})
			\Rightarrow \xi(n) \overset{a.s.}{=} n / \mu + \mathcal{O}(n^{1/p'}) \numberthis \label{eq:xi_expan}\\
			\Rightarrow \vert \xi(n) - n/ \mu \vert \overset{a.s.}{=} & \mathcal{O}(n^{1/p'}) . \numberthis \label{eq-10}
		\end{align*}
		
		Further, by the assumption in \eqref{eq:assm-2} and Lemma~\ref{lm:hobertext},
		\begin{align*}
			& \E_{\pi} \left[\left( \sum_{t=1}^{\tau_{1}} \left\Vert f(X_{t}) - \frac{\eta}{\mu} \right\Vert \right)^{2+\delta}\right] < \infty \\
			\Rightarrow & \E_{Q} \left[ \left( \sum_{t=1}^{\tau_{1}} \left\Vert f(X_{t}) - \frac{\eta}{\mu} \right\Vert \right]^{2+\delta}\right] < \infty \\
			\Rightarrow & \E_{Q} \left[\left(\left\Vert Z_{1} - \tau_{1} \frac{\eta}{\mu}  \right\Vert \right)^{2+\delta}\right] < \infty  \\
			\Rightarrow & \left\Vert \E_{Q}\left(Z_{1} - \tau_{1} \frac{\eta}{\mu}\right) \right\Vert^{2+\delta} < \infty\,, \numberthis \label{eq:modify_assm2}
		\end{align*}
		where the last implication follows from Jensen's inequality. 
  
        Note that $\{(X_{t}, \delta_{t})\}_{t \ge 1}$ follows all the properties of a process generated by the alternative sampling strategy as illustrated in Section~\ref{sec:regen}. So, $X_{1} \sim Q$ and $(X_{2}, \delta_{2}) \ldots, (X_{\tau_{1}+l-1}, \delta_{\tau_{1}+l-1})$ are serially generated from the initial value. Define, $S_{k} := \left( \sum_{i=\tau_{k-1}+1}^{\tau_{k}} X_{i}, \sum_{i=\tau_{k}+1}^{\tau_{k}+l-1} X_{i}, \delta_{\tau_{k-1}+1}, \ldots, \delta_{\tau_{k-1}+l-1} \right)^{\text{T}}$. By construction $S_{k}$'s are iid vectors \cite[see][Section 2]{glynn2011wide}. Consequently, $(Z_{1}, \tau_{1}) = g(S_{1})$ for some measurable function $g(\cdot)$. Again $X_{T_{1}+l} \sim Q$ and independent to all previous elements in the chain; define $(Z_{2}, \tau_{2}) = g(S_{1}, S_{2})$. Hence, for all $k \ge 1$ we can say $(Z_{k}, \tau_{k}) = g(S_{k-1}, S_{k})$ where $S_{0} :=  0$. Also, from \eqref{eq:assm-2} it directly follows that $\E_{\pi}\left(\Vert f(X) \Vert^{2+\delta}\right) < \infty$. Thus, Condition A of \cite{liu2009strong} is satisfied. By \citet[Theorem 2.1]{liu2009strong} for the stationary $1$-dependent process $\{ (Z_{k} - \tau_{k} \frac{\eta}{\mu}): k \ge 1\}$ and $\{ W(t) : t > 0 \} $, a $d$-dimensional standard Wiener process,
		\begin{align*}
			& \left\Vert \sum_{k=1}^{\xi(n)} Z_{k} - T_{\xi(n)} \frac{\eta}{\mu} - \Sigma_{Z}^{1/2} W(\xi(n)) \right\Vert  \overset{a.s.}{=} \mathcal{O}(\xi(n)^{1/(2+\delta)}) \\
			\Rightarrow & \left\Vert \sum_{k=1}^{\xi(n)} Z_{k} - T_{\xi(n)} \frac{\eta}{\mu} - \Sigma_{Z}^{1/2} W(\xi(n)) \right\Vert  \overset{a.s.}{=} \mathcal{O}(n^{1/(2+\delta)}) \numberthis \label{eq:1st bound}.
		\end{align*}
		
		By \eqref{eq:xi_expan} and \citet[Proposition 1.2.1]{csorgo2014strong}, 
		\begin{equation*}
			\Vert W(\xi(n)) - W(n/ \mu) \Vert \overset{a.s.}{=} \mathcal{O}(b_{n})\,,
		\end{equation*}
		where for positive constants $c$ and $c'$,
		\begin{align*}
			b_{n} & = \left( 2c n^{1/p'}\left(\log \left( \frac{n/\mu}{n^{1/p'}}\right) + \log \log \left(n/\mu\right)\right)\right)^{1/2}\\
			& = \left( 2c n^{1/p'}\left(\log \left( \frac{n^{1-1/p'}}{\mu}\right) + \log \log \left(n/\mu\right)\right)\right)^{1/2}\\
			& < c' n^{1/(2p')}\left(\log n\right) \,.
		\end{align*}
		Consequently,
		\begin{equation}
			\Vert W(\xi(n)) - W(n/ \mu) \Vert \overset{a.s.}{=} \mathcal{O}(n^{1/(2p')}\log n) \label{eq:3rd bound}.
		\end{equation}

		Using triangle inequality and since $T_{\xi(n)} < n < T_{\xi(n) + 1}$,
		\begin{align*}
			Y_{\xi(n)} := & \sum_{i=T_{\xi(n)}+1}^{T_{\xi(n)+1}} \left\Vert f(X_{i}) - \frac{\eta}{\mu} \right\Vert \\
   > & \left\Vert \sum_{i= T_{\xi(n)}+1}^{n} \left( f(X_{i}) - \frac{\eta}{\mu} \right) \right\Vert\\ 
   = & \left\Vert \sum_{i=1}^{n} \left( f(X_{i})  - \frac{\eta}{\mu} \right) -  \sum_{k=1}^{\xi(n)} \left( Z_{k} - \tau_{k} \frac{\eta}{\mu} \right) \right\Vert. \numberthis \label{eq:define_Y}
		\end{align*}
		By construction, \( \{ Y_{k} \}_{k\ge 1} \) are positive and identical random variables generated from sum of absolute values of correlated units sampled through wide-sense regeneration. By  the integral transformation inequality and the assumption in equation-\eqref{eq:assm-2}
		\begin{align*}
			\sum_{i=1}^{\infty} \Pr \left({Y_{i}}^{2+\delta} > i \right) & = \sum_{i=1}^{\infty} \Pr \left({Y_{1}}^{2+\delta} > i \right) \\
			& < \int_{1}^{\infty} \Pr \left({Y_{1}}^{2+\delta} > x \right) dx \\
			& < \int_{0}^{\infty} \Pr \left({Y_{1}}^{2+\delta} > x \right) dx \\
			& = \E_{Q}[{Y_{1}}^{2+\delta}]\\
            & < \infty.
		\end{align*}
		Consequently,
		\begin{equation}
			\sum_{i=1}^{\infty} \Pr \left({Y_{i}}^{2+\delta} > i \right) < \infty
			\Rightarrow \sum_{i=1}^{\infty} \Pr \left(Y_{i} > i^{\frac{1}{2+\delta}} \right) < \infty .
		\end{equation}
		Thus by Borel-Cantelli lemma
		\begin{equation}
			Y_{n} \overset{a.s.}{=} \mathcal{O} \left(n^{\frac{1}{2+\delta}} \right) \qquad \text{     as     }  n \to \infty \label{eq:Y_tight_O}.
		\end{equation}
		
		From \eqref{eq:define_Y} and \eqref{eq:Y_tight_O} as $n \to \infty$
		\begin{align*}
			& Y_{\xi(n)} \overset{a.s.}{=} \mathcal{O} \left(\xi(n)^{\frac{1}{2+\delta}} \right) \\
			\Rightarrow & \left\Vert \sum_{i=1}^{n} \left( f(X_{i})  - \frac{\eta}{\mu} \right) -  \sum_{k=1}^{\xi(n)} \left(Z_{k} - \tau_{k} \frac{\eta}{\mu} \right) \right\Vert \overset{a.s.}{=} \mathcal{O} \left(n^{\frac{1}{2+\delta}} \right) \numberthis \label{eq:2nd bound}.
		\end{align*}
		Using the triangle inequality and \eqref{eq:1st bound}, \eqref{eq:3rd bound}, and \eqref{eq:2nd bound} 
		\begin{flalign*}
			\left\Vert \sum_{i=1}^{n}f(X_{i}) - n \frac{\eta}{\mu} -  \frac{\Sigma_{Z}^{1/2}}{\sqrt{\mu}}W(n) \right\Vert < & \left\Vert \sum_{i=1}^{n} \left( f(X_{i})  - \frac{\eta}{\mu} \right) -  \sum_{k=1}^{\xi(n)} \left(Z_{k} - \tau_{k} \frac{\eta}{\mu} \right) \right\Vert \\
            & +  \left\Vert \sum_{k=1}^{\xi(n)} Z_{k} - T_{\xi(n)} \frac{\eta}{\mu} - \Sigma_{Z}^{1/2} W(\xi(n)) \right\Vert \\ & + \left\Vert \Sigma_{Z}^{1/2}\left(W(\xi(n)) - W\left(\frac{n}{\mu}\right)\right) \right\Vert ;
        \end{flalign*}
        \begin{flalign*}
            \implies \left\Vert \sum_{i=1}^{n}f(X_{i}) - n \frac{\eta}{\mu} -  \frac{\Sigma_{Z}^{1/2}}{\sqrt{\mu}}W(n) \right\Vert = \mathcal{O}(n^{1/(2+\delta)}) & + \mathcal{O}(n^{1/(2+\delta)})\\ & + \mathcal{O}(n^{1/p} \log n)
		\end{flalign*}
  
		Thus, with $\beta = \max\{1/(2+\delta), 1/2p'\} = \max\{1/(2+\delta), 1/2p, 1/4\}$ and by lemma-\ref{lm:Ef_eta-mu} as $n \rightarrow \infty$ 
		\begin{align*}
		    \left\Vert \sum_{i=1}^{n}f(X_{i}) - n \frac{\eta}{\mu} -  \frac{\Sigma_{Z}^{1/2}}{\sqrt{\mu}}W(n) \right\Vert & \overset{a.s.}{=} \mathcal{O}(n^{\beta}\log n)\,\numberthis \label{eq:semi-final} \\
      \Rightarrow \left\Vert \sum_{i=1}^{n} f(X_{i}) - n \E_{\pi}[f(X)] - \frac{\Sigma_{Z}^{1/2}}{\sqrt{\mu}} W(n) \right\Vert & \overset{a.s.}{=} \mathcal{O}(n^{\beta}\log n) .
		\end{align*}

		
	\end{proof}

	\section{Proof of Theorem~\ref{thm:mcmc_sip}}
	\label{sec:theorem2}
	\begin{proof}[Proof of Lemma~\ref{lm:poly}] \label{pf:poly}
		$\{X_t\}_{t \ge 1}$ is a polynomially ergodic sequence of random variables of order $\xi$ for $\xi > (2+\delta)(1 + (2 + \delta)/\delta^{*})$. So in \eqref{eq:tv_dist}, $G(n) = n^{-\xi}$. Further, from \cite{jones2004markov} we have that $\alpha(n) \leq n^{-\xi}$ for $n \geq 1$. Consequently, for $p < \xi$
		\[
		\sum_{n=1}^{\infty} n^{p-1} \alpha(n) < \sum_{n=1}^{\infty} n^{p-1} n^{-\xi} < \infty.
		\] 
		By \citet[Proposition 3.1]{samur2004regularity}
		\begin{equation}
			\E_{\pi} [\tau_{1}^{p}] < \infty \ \ \text{for} \ p \in (0, \xi). \label{eq:tau-1}
		\end{equation}
		From Lemma \ref{lm:hobertext} we have $\E_{Q} [\tau_{1}^{p}] < \infty$ for $p \in (0, \xi)$.
	\end{proof}

	
	\begin{proof}[Proof of Lemma~\ref{lm:geom}]\label{pf:geom}
		Since $\{X_t\}_{t \ge 1}$ is geometrically ergodic, $G(n) = t^{n}$ for some $0 < t < 1$. From \cite{jones2004markov}, $ \alpha(n) \leq t^{n}$ for $n\ge 1$. Consequently for $p > 1$,   
		\[
		\sum_{n=1}^{\infty} n^{p-1} \alpha(n) < \sum_{n=1}^{\infty} n^{p-1} t^{n} .
		\]
		
		By a ratio test, for all $p > 1$
		\[
		\lim_{n\rightarrow\infty}\frac{ (n+1)^{p-1} t^{n+1}}{n^{p-1} t^{n}} = \lim_{n\rightarrow\infty} (1+1/n)^{p-1} t = t < 1.
		\]
		So, 
		\[
		\sum_{n=1}^{\infty} n^{p-1} \alpha(n) < \infty \quad \text{for }p >1.
		\]
		By \citet[Proposition 3.1]{samur2004regularity}
		\begin{equation}
			\E_{\pi} [\tau_{1}^{p}] < \infty \ \ \text{for} \ p > 1. \label{eq:tau-2}
		\end{equation}
		From Lemma \ref{lm:hobertext}, $\E_{Q} [\tau_{1}^{p}] < \infty$ for  $p >1.$
	\end{proof}
 
	
	\begin{proof}[Proof of Lemma~\ref{lm:sum_exp}]\label{pf:sum_exp}
		For $p > 1$ and using triangle inequality for $L^{p}$-distances on $\pi$, H\"older's inequality, Markov's inequality, and, infinite sum of $p$-series,
		\begin{align*}
			& \left(\E_{\pi}\left[\left(\sum_{i=1}^{\tau_{1}} \Vert f(X_{i}) \Vert\right)^{p}\right]\right)^{1/p}\\
			& = \left(\E_{\pi}\left[\left(\sum_{i=1}^{\infty} \mathbb{I}(i \leq \tau_{1}) \Vert f(X_{i}) \Vert\right)^{p}\right]\right)^{1/p}\\
			& \leq \sum_{i=1}^{\infty} \left(\E_{\pi}\left(\mathbb{I}(i \leq \tau_{1}) \Vert f(X_{i}) \Vert^{p} \right)\right)^{1/p} \\
			& \leq \sum_{i=1}^{\infty} \left(\left(\E_{\pi}\mathbb{I}(i \leq \tau_{1})\right)^{\delta^{*}/(p+\delta^{*})} \left(\E_{\pi}\left(\Vert f(X_{i}) \Vert^{p+\delta^{*}}\right)\right)^{p/(p+\delta^{*})}\right)^{1/p} \\
			& = \left(\E_{\pi}\left(\Vert f(X) \Vert^{p+\delta^{*}}\right)\right)^{1/(p+\delta^{*})} \sum_{i=1}^{\infty} \left({\Pr}_{\pi}(\tau_{1} \ge i)\right)^{\delta^{*}/{p(p+\delta^{*})}}\\
			& \leq \left(\E_{\pi}\left(\Vert f(X) \Vert^{p+\delta^{*}}\right)\right)^{1/(p+\delta^{*})} \sum_{i=1}^{\infty} \left({\Pr}_{\pi}(\tau_{1}^{\phi} \ge i^{\phi})\right)^{\delta^{*}/{p(p+\delta^{*})}} \\
			& \leq \left(\E_{\pi}\left(\Vert f(X) \Vert^{p+\delta^{*}}\right)\right)^{1/(p+\delta^{*})} \sum_{i=1}^{\infty} \left(\frac{\E_{\pi}(\tau_{1}^{\phi})}{i^{\phi}}\right)^{\delta^{*}/{p(p+\delta^{*})}} \\
			& = \left(\E_{\pi}\left(\Vert f(X) \Vert^{p+\delta^{*}}\right)\right)^{1/(p+\delta^{*})} \left(\E_{\pi}(\tau_{1}^{\phi})\right)^{\delta^{*}/{p(p+\delta^{*})}} \sum_{i=1}^{\infty} \left(\frac{1}{i^{\phi}}\right)^{\delta^{*}/{p(p+\delta^{*})}} < \infty \numberthis \label{eq:l2}.
		\end{align*}
	\end{proof}

 
	\begin{proof}[Proof of Lemma~\ref{lm:geom_f}]\label{pf:geom_f}
		Since $\{X_t\}_{t\geq 1}$ is geometrically ergodic, by \eqref{eq:tau-2}, $\E_{\pi}[\tau_{1}^q] < \infty$  for $q > 1$. Proceeding similarly as Lemma~\ref{lm:sum_exp}, $\E_{\pi}\left(\left(\sum_{i=1}^{\tau_{1}} \Vert f(X_{i}) \Vert\right)^{p}\right) < \infty$.
	\end{proof}

    \label{pf-th2}	
	\begin{proof}[Proof of Theorem~\ref{thm:mcmc_sip}]
		$(a)$ $\{X_t\}_{t \ge 1}$ is polynomially ergodic  of order $\xi$ for $\xi > (2+\delta)(1 + (2 + \delta)/\delta^{*})$. Thus \eqref{eq:tau-1} holds. Then by Lemma \ref{lm:poly}, 
		\begin{equation}
			\E_{Q} [\tau_{1}^{p}] < \infty \qquad \text{for } p \in (0, \xi). \label{tau1}
		\end{equation}  
		By the assumption in the theorem, $\E_{\pi}\left(\Vert f(X) \Vert^{2+\delta+\delta^{*}}\right) < \infty$ for some $\delta > 0$ and $\delta^{*} > 0$. By \eqref{eq:tau-1} and Lemma~\ref{lm:sum_exp}
		\begin{align*}
			& \E_{\pi}\left[\left(\sum_{i=1}^{\tau_{1}} \Vert f(X_{i}) \Vert\right)^{2+\delta}\right] < \infty 
			\Rightarrow \E_{\pi}\left[\left(\sum_{i=1}^{\tau_{1}} \left\Vert f(X_{i}) - \frac{\eta}{\mu} \right\Vert \right)^{2+\delta}\right] < \infty. \numberthis \label{f1}
		\end{align*}
		
		By Theorem~\ref{Ths}, \eqref{tau1}, and \eqref{f1},
		\begin{equation}
			\left\Vert \sum_{i=1}^{n} f(X_{i}) - n \E_{\pi}[f(X)] -  \frac{\Sigma_{Z}^{1/2}}{\sqrt{\mu}} W(n) \right\Vert \overset{a.s.}{=} \mathcal{O}(n^{\beta}\log(n))
		\end{equation}
		as $n \rightarrow \infty$ where $\beta = \max \{1/(2+\delta), 1/(2 p), 1/4\} \ \ \forall p \in (0, \xi)$. Since $\xi > 2$ always, $\beta = \max\{1/(2+\delta), 1/4\}$.
		
		\medskip
		$(b)$ Let  $\{X_t\}_{t\ge 1}$ be geometrically ergodic. So \eqref{eq:tau-2} holds. By Lemma~\ref{lm:geom}
		\begin{equation}
			\E_{Q} [\tau_{1}^{p}] < \infty \ \forall p >1. \label{allmom}
		\end{equation}
		
		Since $\E_{\pi}\left(\Vert f(X) \Vert^{2+ \delta+ \delta^{*}}\right) < \infty$ for some $\delta > 0$ and $\delta^{*} > 0$, by \eqref{eq:tau-2} and Lemma~\ref{lm:geom_f}
		\begin{align*}
			& \E_{\pi}\left[\left(\sum_{i=1}^{\tau_{1}} \Vert f(X_{i}) \Vert\right)^{2+\delta}\right] < \infty 
			\Rightarrow \E_{\pi}\left[\left(\sum_{i=1}^{\tau_{1}} \left\Vert f(X_{i}) - \frac{\eta}{\mu} \right\Vert \right)^{2+\delta}\right] < \infty. \numberthis \label{f2}
		\end{align*}
		
		By Theorem~\ref{Ths}, \eqref{allmom}, and \eqref{f2}, with $\beta = \max\{1/(2+\delta), 1/4\}$ as $n \rightarrow \infty$
		\begin{equation}
			\left\Vert \sum_{i=1}^{n} f(X_{i}) - n \E_{\pi}[f(X)] -  \frac{\Sigma_{Z}^{1/2}}{\sqrt{\mu}} W(n) \right\Vert \overset{a.s.}{=} \mathcal{O}(n^{\beta}\log(n)).
		\end{equation}
	\end{proof}

    \section{Identifying wide-sense regenerations}
    \label{appendix1}
    Establishing a $1$-step minorization with the corresponding small set calculations has been done for deterministic scan Gibbs samplers (see \cite{mykland1995regeneration}, \cite{roy2007convergence}). Using a random scan version of the Gibbs sampler of \cite{AC1993probit}, we present a framework for identifying wide-sense regenerations.
    
    The setup is similar as purported in \cite{roy2007convergence}. For $i = 1, 2, \dots, n$, consider $Y_i \overset{\text{ind}}{\sim}  \text{Bernoulli} (\Phi(x_i^T\beta))$ where $x_i \in \mathbb{R}^p$ are given and $\beta \in \mathbb{R}^p$ is the vector of coefficients. The resulting likelihood is:
\begin{equation*}
    \Pr(Y_{1} = y_{1}, \cdots, Y_{n} = y_{1} \mid \beta) = \prod_{i=1}^{n} \{\Phi(x_{i}^{T}\beta)\}^{y_{i}}\{1 -\Phi(x_{i}^{T}\beta)\}^{1 - y_{i}}.
\end{equation*}
We consider a Bayesian model with a flat prior for $\beta$ for which the posterior reduces to 
\begin{align*}
    \pi(\beta \mid \mathbf{y}) & \propto \Pr(Y_{1} = y_{1}, \cdots, Y_{n} = y_{1} \mid \beta) \ \pi(\beta) \\
    & = \prod_{i=1}^{n} \{\Phi(x_{i}^{T}\beta)\}^{y_{i}}\{1 -\Phi(x_{i}^{T}\beta)\}^{1 - y_{i}}.
\end{align*}

 Although the analytical form of $\pi(\beta \mid \mathbf{y})$ is not easily achievable, \cite{AC1993probit} has provide a deterministic Gibbs sampler, which we refer to as the AC sampler. One iteration of the Markov chain is the following:
\begin{itemize}
    \item draw $\mathbf{z} = (z_{1}, z_{2}, \cdots, z_{n})^{T}$ such that ${z}_{i} \overset{iid}{\sim} \text{Truncated Normal}(X^{T}\beta_{i-1}, 1, y_{i})$ for all $i = 1, 2, \cdots, n$. If $y_{i} = 0$, $(-\infty, 0]$ will be the truncation range and if $y_{i} = 1$, $(0, \infty)$ will be the truncation range.
    \item Draw $\beta \sim \text{N}_{p}\left((X^{T}X)^{-1}X^{T}\mathbf{z}, (X^{T}X)^{-1}\right)$.
\end{itemize}

For the  $i^{\text{th}}$ iteration the deterministic scan Markov transition kernel is 
\begin{align*}
    k_{DS}(\beta_{i+1}, \textbf{z}_{i+1} \mid \beta_{i}, \textbf{z}_{i}) = \pi(\textbf{z}_{i+1} \mid \beta_{i+1}, \mathbf{y}) \pi(\beta_{i+1} \mid \textbf{z}_{i}, \mathbf{y})
\end{align*}

 The deterministic scan AC sampler is geometrically ergodic \cite[see][Theorem 1]{roy2007convergence} and regenerations can be identified using ``distinguished point" technique  of \cite{mykland1995regeneration}. For a distinguished point $\textbf{z}^{*} \in \mathbb{R}^{n}$ and a rectangular small set $D^{*} = [c_{1}, d_{1}]\times[c_{2}, d_{2}]\times\ldots[c_{p}, d_{p}]$ the minorization kernel is 

 \begin{equation*}
     Q(\beta_{i+1}, \textbf{z}_{i+1}) := \frac{1}{\epsilon} \pi(\beta_{i+1} \mid \textbf{z}^{*}, \mathbf{y}) \pi(\textbf{z}_{i+1} \mid \beta_{i+1}, \mathbf{y}) I_{D^{*}}(\beta_{i+1});
 \end{equation*}
where $\epsilon = \int_{\mathbb{R}^{p}} \int_{\mathbb{R}^{n}} \pi(\beta_{i+1} \mid \textbf{z}^{*}, \mathbf{y}) \ \pi(\textbf{z}_{i+1} \mid \beta_{i+1}, \mathbf{y}) I_{D^{*}}(\beta_{i+1}) \ d\textbf{z}_{i+1} d\beta_{i+1}$. For $t_{j}$ being the $j^{\text{th}}$ term of $t = (\textbf{z}_{i} - \textbf{z}_{*})^{T} X$, the minorization constant will be
\begin{equation*}
    s(\textbf{z}_{i}) := \frac{\epsilon \ \exp\left( \sum_{j=1}^{p}\left(c_{j} t_{j} I_{\mathbb{R}^{+}}(t_{j}) + d_{j} t_{j} I_{\mathbb{R}^{-}}(t_{j})\right)\right)}{\exp(0.5 (\textbf{z}_{i})^{T} X^{T} (X^{T} X)^{-1} X (\textbf{z}_{i}) - 0.5 (\textbf{z}^{*})^{T} X^{T} (X^{T} X)^{-1} X (\textbf{z}^{*}))}.
\end{equation*}

Consequently, the minorization holds with 
 \begin{equation}
     k_{DS}(\beta_{i+1}, \textbf{z}_{i+1} \mid \beta_{i}, \textbf{z}_{i}) \ge s(\textbf{z}_{i}) \ Q(\beta_{i+1}, \textbf{z}_{i+1}).\label{eq:mino-1-DS}
 \end{equation}

For $\Delta$ being the Dirac measure, the one-step random scan AC-sampler kernel is defined as
 \begin{align*}
     k_{RS}(\beta_{i+1}, \textbf{z}_{i+1} \mid \beta_{i}, \textbf{z}_{i}) = p \pi\left(\beta_{i+1} \mid \textbf{z}_{i}\right) \Delta_{\textbf{z}_{i}}(z_{i+1}) + (1 - p) \pi\left(z_{i+1} \mid \beta_{i+1}\right) \Delta_{\beta_{i}}(\beta_{i+1}).
\end{align*}

Similarly, the $2$-step random scan Markov transition kernel is 
\begin{align*}
    & k^{2}_{RS}(\beta_{i+2}, \textbf{z}_{i+2} \mid \beta_{i}, \textbf{z}_{i}) \\ =  & \int_{\mathbb{R}^{p}} \int_{\mathbb{R}^{n}} k_{RS}(\beta_{i+2}, \textbf{z}_{i+2} \mid \beta_{i+1}, \textbf{z}_{i+1}) k_{RS}(\beta_{i+1}, \textbf{z}_{i+1} \mid \beta_{i}, \textbf{z}_{i}) \  d\textbf{z}_{i+1} d\beta_{i+1} \\
    = & \int_{\mathbb{R}^{p}} \int_{\mathbb{R}^{n}}  \left( p \pi\left(\beta_{i+2} \mid \textbf{z}_{i+1}\right) \Delta_{\textbf{z}_{i+1}}(\textbf{z}_{i+2}) + (1 - p) \pi\left(\textbf{z}_{i+2} \mid \beta_{i+1}\right) \Delta_{\beta_{i+1}}(\beta_{i+2})\right) \\ &  \ \ \ \left(p \pi\left(\beta_{i+1} \mid \textbf{z}_{i}\right) \Delta_{\textbf{z}_{i}}(\textbf{z}_{i+1}) + (1 - p) \pi\left(\textbf{z}_{i+1} \mid \beta_{i}\right) \Delta_{\beta_{i}}(\beta_{i+1})\right) \ d\textbf{z}_{i+1} d\beta_{i+1}\\
    = & \int_{\mathbb{R}^{p}} \int_{\mathbb{R}^{n}} p^{2} \pi\left(\beta_{i+2} \mid \textbf{z}_{i+1}\right) \pi\left(\beta_{i+1} \mid \textbf{z}_{i}\right) \Delta_{\textbf{z}_{i}}(\textbf{z}_{i+1}) \Delta_{\textbf{z}_{i+1}}(\textbf{z}_{i+2}) \ d\textbf{z}_{i+1} d\beta_{i+1} \\   & \ \  + \int_{\mathbb{R}^{p}} \int_{\mathbb{R}^{n}} p(1-p) \pi\left(\beta_{i+2} \mid \textbf{z}_{i+1}\right) \pi\left(\textbf{z}_{i+1} \mid \beta_{i}\right) \Delta_{\beta_{i}}(\beta_{i+1}) \Delta_{\textbf{z}_{i+1}}(\textbf{z}_{i+2}) \ d\textbf{z}_{i+1} d\beta_{i+1} \\ &  \ \ + \int_{\mathbb{R}^{p}} \int_{\mathbb{R}^{n}} p(1 - p) \pi\left(\textbf{z}_{i+2} \mid \beta_{i+1}\right) \pi\left(\beta_{i+1} \mid \textbf{z}_{i}\right) \Delta_{\textbf{z}_{i}}(\textbf{z}_{i+1}) \Delta_{\beta_{i+1}}(\beta_{i+2}) \ d\textbf{z}_{i+1} d\beta_{i+1}\\ & \ \  + \int_{\mathbb{R}^{p}} \int_{\mathbb{R}^{n}} (1 - p)^{2} \pi\left(\textbf{z}_{i+2} \mid \beta_{i+1}\right) \pi\left(\textbf{z}_{i+1} \mid \beta_{i}\right) \Delta_{\beta_{i}}(\beta_{i+1}) \Delta_{\beta_{i+1}}(\beta_{i+2}) \ d\textbf{z}_{i+1} d\beta_{i+1}.
\end{align*}
So,
\begin{align*}
    k^{2}_{RS}\left(\beta_{i+2}, \textbf{z}_{i+2} \mid \beta_{i}, \textbf{z}_{i}\right) = & \ p^{2} \pi\left( \beta_{i+2} \mid \textbf{z}_{i+2} \right) \Delta_{\textbf{z}_{i}}(\textbf{z}_{i+2}) \\
    & + p (1 - p) \pi\left(\beta_{i+2} \mid \textbf{z}_{i+2}\right) \ \pi\left(\textbf{z}_{i+2} \mid \beta_{i}\right) \\
    & + p(1 - p) \pi\left(\textbf{z}_{i+2} \mid \beta_{i+2}\right) \pi\left( \beta_{i+2} \mid \textbf{z}_{i}\right) \\
    & + (1 - p)^{2} \pi\left(\textbf{z}_{i+2} \mid \beta_{i}\right) \Delta_{\beta_{i}}(\beta_{i+2}).\numberthis \label{eq:2stepRS}
\end{align*}
Using the 1-step minorization of the deterministic scan Gibbs sampler and with
$s'(\textbf{z}_{i}) := p(1 - p) s(\textbf{z}_{i})$, we get
\begin{align*}
 & k^{2}_{RS}\left(\beta_{i+2}, \textbf{z}_{i+2} \mid \beta_{i}, \textbf{z}_{i}\right) \\ = & \ p^{2} \pi\left( \beta_{i+2} \mid \textbf{z}_{i+2} \right) \Delta_{\textbf{z}_{i}}(\textbf{z}_{i+2}) + p (1 - p) \pi\left(\beta_{i+2} \mid \textbf{z}_{i+2}\right) \ \pi\left(\textbf{z}_{i+2} \mid \beta_{i}\right) \\
 & + p(1 - p) \pi\left(\textbf{z}_{i+2} \mid \beta_{i+2}\right) \pi\left( \beta_{i+2} \mid \textbf{z}_{i}\right) + (1 - p)^{2} \pi\left(\textbf{z}_{i+2} \mid \beta_{i}\right) \Delta_{\beta_{i}}(\beta_{i+2}) \\
\ge & \ p(1 - p) \pi\left(\textbf{z}_{i+2} \mid \beta_{i+2}\right) \pi\left( \beta_{i+2} \mid \textbf{z}_{i}\right) \\
\ge & \ p(1 - p) s(\textbf{z}_{i}) \ Q(\beta_{i+2}, \textbf{z}_{i+2}) \\
= & \ s'(\textbf{z}_{i}) \ Q(\beta_{i+2}, \textbf{z}_{i+2}) \numberthis \label{eq:mino-2-RS}.
\end{align*}

Following the ideas in \cite{mykland1995regeneration} and \cite{roy2007convergence}, we can obtain the probability of a regeneration from the observed chain in the following way:
\begin{align*}
    \eta_{i} & = \Pr \left(\delta_{i} = 1 \mid (\beta_{i}, \textbf{z}_{i}), (\beta_{i+2}, \textbf{z}_{i+2}) \right) ; \\
    & = \frac{\Pr (X_{i+2} = y \mid X_{i} = x, \delta_{i} = 1) \Pr (\delta_{i} = 1 \mid X_{i} = x) \Pr(X_{i} = x)}{\Pr (X_{i+2} = y \mid X_{i} = x) \Pr(X_{i} = x)}; \\
    & = \frac{\Pr (X_{i+2} = y \mid X_{i} = x, \delta_{i} = 1) \Pr (\delta_{i} = 1 \mid X_{i} = x)}{\Pr (X_{i+2} = y \mid X_{i} = x)}; \\
    & = \frac{s'(\textbf{z}_{i}) \ Q(\beta_{i+2}, \textbf{z}_{i+2})}{k^{2}_{RS}\left(\beta_{i+2}, \textbf{z}_{i+2} \mid \beta_{i}, \textbf{z}_{i}\right)};\\
    & = p(1 - p) \frac{\exp( - 0.5 (\textbf{z}_{i})^{T} X^{T} (X^{T} X)^{-1} X (\textbf{z}_{i})))}{\exp( - 0.5 (\textbf{z}_{*})^{T} X^{T} (X^{T} X)^{-1} X (\textbf{z}_{*})))}; \\
    & \times \exp\left\{ \sum_{i=1}^{p}(c_{i} t_{i} I_{\mathbb{R}^{+}}(t_{i}) + d_{i} t_{i} I_{\mathbb{R}^{-}}(t_{i})) \right\}; \\
    & \times \pi(\beta_{i+2} \mid \textbf{z}^{*}, \mathbf{y}) \ \pi(\textbf{z}_{i+2} \mid \beta_{i+2}, \mathbf{y}) I_{D^{*}}(\beta_{i+2});\\
    & \times \frac{1}{k^{2}_{RS}\left(\beta_{i+2}, \textbf{z}_{i+2} \mid \beta_{i}, \textbf{z}_{i}\right)}.
\end{align*}
The $\eta_{i}$ can be calculated analytically from a run of the random scan Gibbs sampler. By drawing Bernoulli samples with success probability $\eta_{i}$ we identify the regenerations if the outcome of a trial is $1$. 



For output analysis of the observed samples, the regenerative estimator of the asymptotic variance can be exploited as discussed in Section~\ref{sec:reg_est}. However, the bounds obtained in the minorization are weak enough that $\eta_i$s are prohibitively small, yielding close to no regenerations in large simulation lengths. 


	\bibliographystyle{apalike}
 
	\bibliography{mybibfile1.bib}

\end{document}